\newtheorem{proposition}{Proposition}
\begin{document}

\title{Open Sourcing GPTs: Economics of Open Sourcing Advanced AI Models\thanks{I wish to express my sincere gratitude to my advisors: Francesco Decarolis, Avi Goldfarb, Andrea Fosfuri, and Carlo Schwarz. I am also grateful to Kevin Bryan, Alfonso Gambardella, Joshua Gans, and Claudio Panico whose suggestions greatly helped to improve the paper. Please address correspondence to  mahyar.habibi@phd.unibocconi.it.}}

\author{
\name{Mahyar Habibi}
\address{Department of Economics, Bocconi University}
}

\abstract{This paper explores the economic underpinnings of open sourcing advanced large language models (LLMs) by for-profit companies. Empirical analysis reveals that: (1) LLMs are compatible with R\&D portfolios of numerous technologically differentiated firms; (2) open-sourcing likelihood decreases with an LLM's performance edge over rivals, but increases for models from large tech companies; and (3) open-sourcing an advanced LLM led to an increase in research-related activities. Motivated by these findings, a theoretical framework is developed to examine factors influencing a profit-maximizing firm's open-sourcing decision. The analysis frames this decision as a trade-off between accelerating technology growth and securing immediate financial returns. A key prediction from the theoretical analysis is an inverted-U-shaped relationship between the owner's size, measured by its share of LLM-compatible applications, and its propensity to open source the LLM. This finding suggests that moderate market concentration may be beneficial to the open source ecosystems of multi-purpose software technologies.
\par 
\vspace{5mm} 

\noindent \textbf{Keywords}: Economics of Open Source; Economics of Artificial Intelligence (AI); Large Language Models
} 

\author{Mahyar Habibi \break{\normalsize{ Department of Economics, Bocconi University}}}

\date{\today}

\maketitle
\thispagestyle{empty}

\newpage
\setcounter{page}{1}
\section{Introduction}
Open source contributions have significantly shaped the growth of artificial intelligence, machine learning, and more recently large language models (LLMs). Interestingly, large for-profit technology companies have played crucial and at times dual roles in this rapidly evolving landscape. On one hand, these companies have made notable contributions to the open source ecosystem by sharing scientific breakthroughs such as Transformer architecture and open-sourcing advanced software like TensorFlow, PyTorch, and LLaMA. The extent and impact of their contributions over the past decade arguably surpass those made by the most prolific academic institutions \citep{ahmed2023growing}. On the other hand, following recent breakthroughs in LLM capabilities, some major technology firms have revised their stance toward the open source ecosystem. They now restrict and monetize access to their LLMs while expressing concerns about the dangers of open-sourcing advanced models \citep[e.g.,][]{de_vynck_2023, Lin2024ShouldAI}.

This paper argues that open-sourcing advanced AI models like LLMs presents profit-maximizing firms with a strategic trade-off between accelerating technological growth and securing immediate financial returns. The key takeaway is that firms are most likely to open source multi-purpose software such as LLMs when they own a significant but not excessive share of compatible applications. Small firms with few compatible applications prefer a closed strategy for immediate revenue, while firms dominating compatible applications find open source community contributions insignificant compared to their internal resources. However, for technologies with wide-ranging use cases like LLMs, even Big Tech giants own a modest share of compatible applications, potentially finding the benefits of open sourcing outweigh the costs. Meta CEO Mark Zuckerberg's remarks on Generative AI and the company's LLaMA open sourcing strategy align with this argument. Zuckerberg stated, ``In the last year, we have seen some really incredible breakthroughs — qualitative breakthroughs — on generative AI and that gives us the opportunity to now go take that technology, push it forward, and \emph{build it into every single one of our products}," and while he does not expect LLaMA to generate ``a large amount of revenue in the near term, but over the long term, hopefully that can be something" \citep[][]{vanian2023zuck, vanian2023metas}.  This insight contributes to our understanding of the economics of open sourcing in AI, highlighting how the properties of AI as a potential general-purpose technology influence firms' strategic decisions and, consequently, the AI development trajectory. 

The analysis proceeds in four parts. The first part examines the compatibility of LLMs in the R\&D process of innovating firms. For this part, I use patent data and propose a novel strategy to examine compatibility of firms' R\&D process with LLMs. I find that LLMs are compatible with R\&D portfolios of a large set of technologically diverse firms, implying a broad range of industrial applications for this technology. 

In the second part, I examine the relationship between the quality of the models and the open-sourcing strategy of the developers. Using data on major model releases and their performance on a widely used benchmark, I find that a 10-point increase in quality (on a 100-point scale) over the existing state-of-the-art open source model is associated with a 10-11 percentage point decrease in the likelihood of the model being open sourced. Furthermore, for-profit organizations are, on average, 14-18\% less likely to open source a model. However, the analysis suggests that Big Tech companies, ceteris paribus, are 20\% more likely to open source a model than other for-profit organizations. In the third part of the analysis, I combine AI/ML-related publication records with GitHub data and document a significant increase in research-related activities among LLM researchers following the open source release of LLaMA, an advanced LLM developed by Meta. This finding implies that open-sourcing advanced software can stimulate related R\&D efforts.

Motivated by these findings, I propose a theoretical framework in the final part of the analysis to examine the decision-making process of for-profit firms in developing and open-sourcing a new LLM. In the theoretical analysis, LLMs are framed as a potential general-purpose technology (GPT), capable of boosting profits in various applications. The model is structured as a two-stage decision-making process. Initially, a firm assesses the quality of the existing open source model to decide whether to develop a new LLM. In the second stage, the firm decides how to optimally allocate computational resources for integrating the model into its applications. Should the firm opt to develop a new LLM, it then faces a choice: permanently open source the model or keep it proprietary for an additional period. This decision presents a strategic trade-off: stimulate software growth and R\&D efforts through open-sourcing or secure immediate profits by licensing. By open-sourcing, a firm leverages external contributions to enhance the model, accelerating its growth and integrating it more effectively with applications to boost profits. Alternatively, a closed-source release enables immediate revenue through API sales to external software producers, at the expense of missed community contributions.

The theoretical analysis generates several key predictions aligned with empirical findings. It suggests that the open-sourcing decision depends strongly on the quality lead over alternative open source models, with larger leads favoring closed-source strategies. The analysis predicts an inverted-U shaped relationship between firm size and open-sourcing tendency, reflecting varying benefits from accelerated growth at different scales of LLM-compatible applications. Additionally, while both small and large firms may find developing new LLMs profitable when existing open source quality is modest, only larger firms are likely to do so when high-quality open source alternatives exist. Moreover, the model reveals nuanced effects of open source ecosystem efficiency. In a strong ecosystem, open-sourcing a marginally superior model may be beneficial. However, as the quality gap widens, open-sourcing becomes less attractive and the firm may have incentives to limit the efficiency of the open source ecosystem, thereby slowing the progress of open source rivals. This insight is particularly relevant given recent calls from major tech companies to regulate open source releases of advanced models  \citep[e.g.,][]{de_vynck_2023, nolan2023big}.

AI is not the only field that saw significant contributions from for-profit companies to its open source ecosystem. Much of the infrastructure of Internet rests on foundations that were open sourced by for-profit firms, as well as operating systems for personal computers (Linux) and mobile devices (Android). Consequently, there's an extensive literature on the economics of open source software. This literature typically falls into two, sometimes overlapping categories. The predominant category examines programmers' motivations for contributing to open source projects. Though these incentives are crucial to the open source ecosystem of AI, my study does not delve into the individuals' incentives for contributing to open sourced AI projects. Instead, I focus on modeling the open-sourcing decisions of firms where a functioning open source community exists. For those interested in the incentives of open source contributors, \cite{lerner2002some} offers a comprehensive introduction to this area.

The second stream of literature on open source software, examines why firms choose to open source their proprietary software. This phenomenon extends beyond AI, with a history of strategic open-sourcing decisions in various for-profit sectors. Existing research predominantly identifies the attraction of users to complementary proprietary products as a key driver for open sourcing \citep[e.g.,][]{lerner2002some, hippel2003open, lerner2006dynamics, fosfuri2008penguin}. However, other motivations are also discussed. \cite{henkel2004open} discusses standard-setting and signaling technical prowess, while \cite{economides2006two} considers open-sourcing as a platform strategy to benefit from proprietary applications built upon it. \cite{gambardella2018open} points out that downstream firms may collaborate on open source alternatives to bypass upstream suppliers, and \cite{nagle2018learning} highlights the learning benefits firms gain from crowd feedback in open source projects.  The theoretical framework in this study draws parallels to the competition between for-profit and non-profit entities in operating systems in \cite{casadesus2006dynamic}, where the focus is on demand-side learning.

I make two contributions to this strand of literature. Firstly, despite being frequently discussed in the literature \citep[e.g.,][]{lerner2002some, lerner2006dynamics}, empirical evidence concerning the impact of open source software on encouraging research activities in a causal framework is rare. To the best of my knowledge, this study is the first to document empirical evidence concerning the potential effects of open source software on stimulating research activities. \cite{nagle2019open} studies the impact of using open source software on firms' productivity and finds a positive and significant impact on the subset of firms with an ecosystem of complements. However, I am not aware of a study that directly investigates the impact of open source on research activity within a causal framework. Secondly, this study departs from existing literature by treating software not just as a product but as an enabling technology with applications across various sectors, generating nuanced insights into strategic development and open sourcing decisions not fully captured by existing frameworks. 

Instances of inventors sharing technological advancements openly are rare, but not exclusive to AI. This phenomenon, termed ``collective invention'' by \cite{allen1983collective}, was observed in 19th-century iron-making in Britain’s Cleveland district, where companies freely exchanged blast furnace design improvements. Similar patterns emerged in post-1800 steam engine enhancements \citep{nuvolari2004collective} and the flat panel display industry’s evolution \citep{spencer2003firms}. \cite{osterloh2007open} further suggest that open source software development is a modern embodiment of this collective invention concept. The open source ecosystem in AI and LLMs shares similarities and differences with historical collective invention cases. A common thread is the reliance on experimental trial and error, where shared experiences significantly enhance learning opportunities. However, in contrast to the AI ecosystem, where large tech companies play a pivotal role, historical episodes of collective invention often featured smaller firms with limited R\&D resources. This study proposes that open source contribution of tech firms in AI is attributable to the broad applicability of AI, extending beyond the scope of any single firm. Consequently, the opportunity for each major tech firm to leverage community resources for the rapid advancement of their models remains substantial. 

This study also relates to recent work examining the changing dynamics between industry and academic research. \cite{arora2020changing} and \cite{arora2021knowledge} document how corporate labs have shifted away from basic research towards development activities, potentially hindering the emergence of general-purpose technologies. They argue that firms' scientific research decisions are shaped by a trade-off between internal benefits and spillover costs to rivals, suggesting this dynamic has contributed to declining corporate research investment. My analysis suggests that the tension between knowledge spillovers to rivals and appropriability may be partially mitigated when the technology's application domain is sufficiently expansive and firms can protect their competitive advantage through downstream specialization, offering a new angle to understand open sourcing advanced AI systems by big tech companies.

This paper also contributes to the rapidly growing field of the economics of AI. A growing strand of literature focuses on AI and more recently LLMs characteristics as a general-purpose technology \citep[e.g.,][]{brynjolfsson2018artificial, cockburn2018impact,agrawal2023artificial, agrawal2023similarities, goldfarb2023could, eloundou2023gpts}. Beyond the analysis of AI as a GPT, \cite{jacobides2021evolutionary} and \cite{ahmed2023growing}  highlight the dominance of few Big Tech firms in terms of resources and influence on AI research. The role of open source in AI is further examined by \cite{rock2019engineering}, studying how open-sourcing TensorFlow by Google affected the market valuation of AI-focused companies. This study contributes to this literature by exploring how  characteristics of LLMs, as a potential general-purpose technology, influence firms' decisions to open source their models, and consequently, the technology's development trajectory. 

Lastly, the method proposed in this paper for obtaining latent technology representation of firms and technologies can contribute to the broader innovation literature interested in examining similarities and differences in R\&D processes using patent data. Recently, there has been growing interest in using unsupervised NLP techniques to represent a firm’s R\&D portfolio within a latent vector space \citep[e.g.,][]{arts2021technology, hain2022text}. However, the popularity of unsupervised techniques in AI/ML is primarily driven by the unavailability of enough labeled training data by domain experts \citep{hovy2022text}. This contrasts sharply with patent data, where patents are classified by domain experts into comprehensive and detailed patent classification systems. Although efforts to use patent classification systems to represent firms' technologies go as far back as \cite{jaffe1986}, the challenges posed by the discrete and rigid structure of patent classification systems have encouraged researchers to adopt unsupervised techniques for these purposes. Inspired by classical NLP and ML techniques, I propose a flexible method that overcomes these challenges and creates a technology latent space using ``gold standard'' data without relying on fully unsupervised techniques.

The remainder of this paper is structured as follows: Section \ref{sec:context} provides a brief overview of the ecosystem of LLMs. Section \ref{sec:data} describes the data. Section \ref{sec:landscape} introduces the method used to create the latent technology space and analyzes LLMs within the constructed technology landscape. Section \ref{sec:empirics} studies the open-sourcing decisions in the LLM ecosystem and examines impact of open-sourcing LLaMA on research activity of LLM-researchers. Section \ref{sec:model} introduces the theoretical framework and outlines its predictions, and Section \ref{sec:conclude} concludes the paper.

\section{Context; LLMs in a Nutshell}
\label{sec:context}
Although not clearly defined, Large Language Models (LLMs) can be broadly described as models based on artificial neural networks with billions of parameters, trained on a vast amount of text data in an unsupervised fashion, and capable of processing and often generating natural language data. In practice, however, LLMs mostly refer to models, often with tens of billions of parameters, built on the \emph{Transformer} architecture proposed by \cite{vaswani2017attention}. The first generation of models now commonly recognized as LLMs, including BART, GPT-2, and T5, were released in 2019 \citep[]{lewis2019bart, radford2019language, raffel2020exploring}. Since then, there has been a significant increase in the quality, quantity, and scale of such models. This section briefly describes the development process and the open source ecosystem of LLMs.

Large Language Models often undergo two main phases during their development: pre-training and fine-tuning. During pre-training, the model is exposed to a vast amount of text data (often tens of Terabytes) and is trained to predict the next word in a sentence given the previous words. This process requires massive computing resources (thousands of GPUs), can take months, and costs tens of millions of dollars for the state-of-the-art models \citep{NYT2023GPT}. The result is a versatile model capable of generating coherent text but often unable to provide desired responses for specific applications (e.g., chat bot). The purpose of fine-tuning is to adapt the pre-trained model to a specific task or domain and involves updating the parameters of the pre-trained model on a smaller, task-specific dataset \citep{howard2018universal}. It is noteworthy that fine-tuning, often done using relatively small datasets (e.g., 10-100K examples), incurs costs that are a fraction of the pre-training costs. The fine-tuning step can also involve additional steps such as reinforcement learning from human feedback (RLHF), which integrates human judgments directly into the fine-tuning process and allows models to learn preferences that are difficult to capture with traditional datasets or reward structures.

In the domain of LLMs, open source is more nuanced than what is traditionally perceived as open source software (OSS). OSS can be loosely defined as software projects with published source code accompanied by a license allowing modification and redistribution \citep{lerner2002some}. However, for LLMs, the training code is neither the only nor the most critical component of the software. The key components enabling practical use are the model parameters or weights. These weights can be made public without specifying the full training procedure or the model's architecture. Additionally, datasets for fine-tuning the model can either be open sourced or kept proprietary. Furthermore, there are stark differences among contributors within the open source ecosystem of LLMs. A model's performance, in terms of next-word prediction accuracy, depends largely on the model's size, dataset volume, and computing resources dedicated to training \citep{kaplan2020scaling}. This so-called ``scaling laws'' of language models implies that developing state-of-the-art LLMs from scratch incurs significant costs, limiting the ability of many organizations to contribute new pre-trained models to the open source community. However, when it comes to open-sourcing datasets, new training or inference methods, or releasing fine-tuned models, the open source community is more diversified.

Figure \ref{fig:orgs} illustrates the number of open and closed models for ten leading organizations according to the ecosystem dataset from the Center for Research on Foundation Models (CRFM) at Stanford University\footnote{This dataset includes only text-based models as well as multi-modal models such as text to image or text to audio.}.  Google, OpenAI, Microsoft, and Meta lead in the number of models released. While most organizations have released both open and closed models, their strategies for open-sourcing vary significantly. Prominent AI startups such as OpenAI, Cohere, and Anthropic tend to keep their models primarily closed. Among the Big Tech companies, Meta has released more models publicly, and its LLaMA-series models are among the largest and most widely used open models. Google, on the other hand, employs a different strategy by keeping its flagship and larger models closed, while continuing to open source smaller models.

\begin{figure}[htp]
\centering
\footnotesize
\caption{The Number of Open and Closed Models by Leading LLM Developers \label{fig:orgs}}
    \centering
        \includegraphics[width=.9\linewidth]{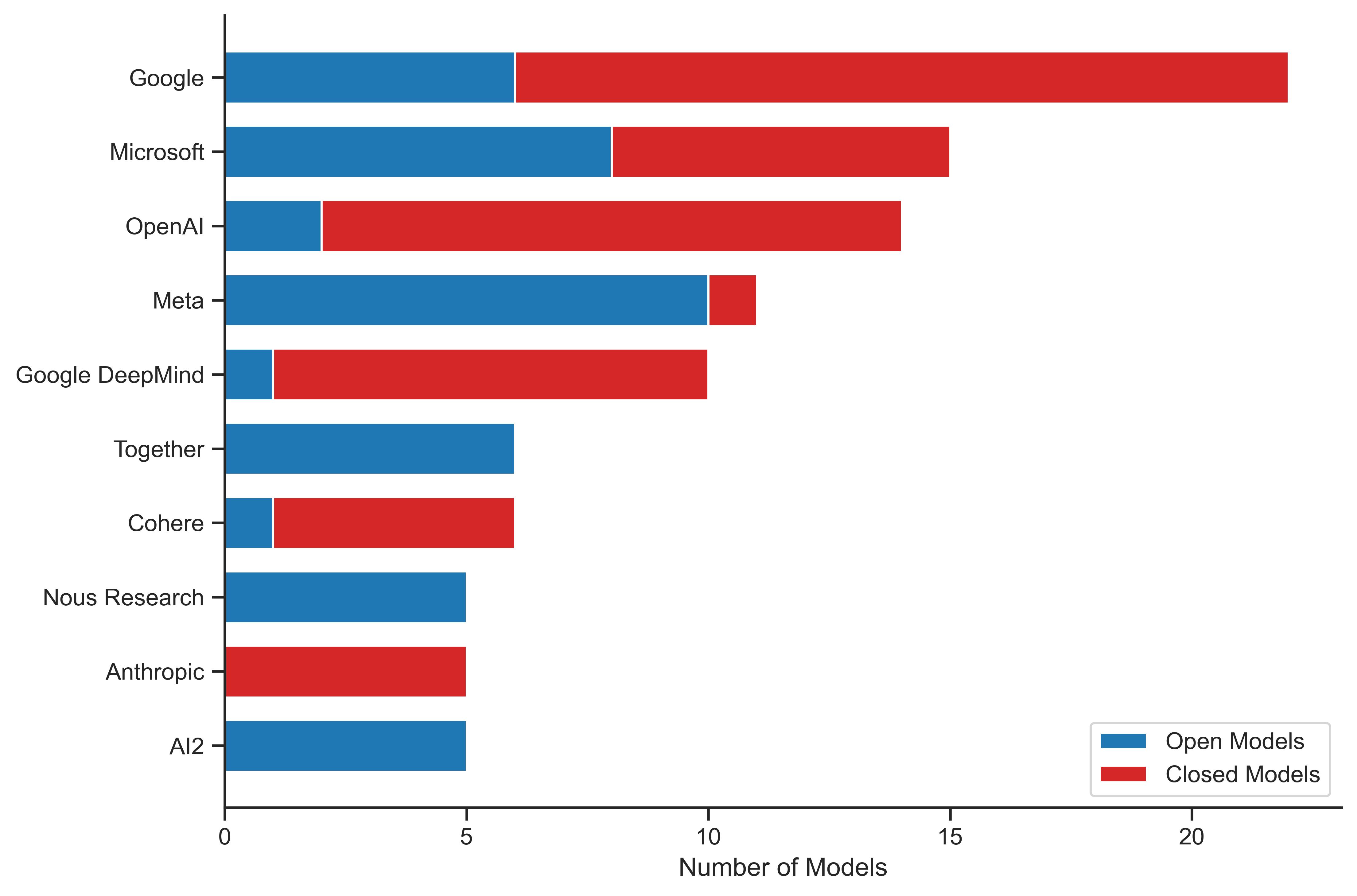}
        
    \hspace{0.4cm}\parbox{\textwidth}{\footnotesize{\textit{Notes:} The figure illustrates the number of open and closed LLMs by ten leading LLM developers in the ecosystem dataset of Center for Research on Foundation Models at Stanford University.}}
\end{figure}

\section{Data}
\label{sec:data}

This study collects data from multiple sources for a comprehensive analysis at the firm and researcher levels within the open source (OS) ecosystem of large language models (LLMs). Data were extracted from \emph{Papers-with-Code}\footnote{Paperswithcode.com}, \emph{arXiv}, and \emph{GitHub} to construct proxies for firms' contributions to the open source community and activities of LLM researchers. Additionally, an analysis of patent application data filed with the US Patent and Trademark Office (USPTO) was conducted to evaluate the technology compatibility of firms and their engagement with Foundation Models. Below, a detailed description of each data source is provided.

\begin{itemize}
    \item \textbf{Papers-with-Code} is a community-driven initiative led by the core team at Meta AI Research. It provides practitioners with free access to AI/ML research resources. This platform maintains up-to-date information on open-access AI/ML publications and tracks the presence of both official and unofficial code repositories associated with each paper. I retrieved the data in January 2024, focusing on papers that have an official repository on GitHub and were published on arXiv in 2019 or later. The resulting dataset contains more than 108 thousand publications.
    
    \item \textbf{arXiv}: Data on the initial publication dates, titles, and abstracts of papers were collected from arXiv. This analysis focused on papers published from 2019 onward, coinciding with the release of the first generation of LLMs  such as GPT-2 and T5. Identifying papers related to LLMs was based on the analysis of its title and abstract, the methodology of which is detailed subsequently in this section.

    \item \textbf{GitHub}:  I consider an open-access paper listed on Papers-with-Code with an official GitHub code repository as an open source contribution.\footnote{The presence of an official code repository differentiates papers that contribute tangibly to the open source ecosystem from those that only describe model performance across various benchmarks and tasks without any open source contribution.} I extract two critical pieces of information from GitHub: information on repository owners and  data of contributors to these repositories. 
    
    \item \textbf{USPTO}: This study employs patent data to identify firms that utilize generative language models in their R\&D efforts, to assess the compatibility of a firm's technology with LLMs, and to evaluate the breadth of applications firms aim to integrate with this technology. Considering LLMs' status as an emerging technology, patent application data was preferred over granted patent data because the latter captures innovation activities only after a delay of at least a few years. The data were accessed through PatentsView.org in February 2024, providing information updated through December 31, 2023. The analysis is confined to utility patent applications filed by organizations with at least two applications during 2019-2023. The dataset comprises nearly 1.4 million applications from c.a. 61 thousand organizations. 
\end{itemize}
\subsubsection*{Merging the Datasets}
The Papers-with-Code dataset includes URL links to the respective arXiv pages and GitHub repositories, provided there is an associated page or repository for the paper. Linking organizations that own these repositories with those listed in patent application data is less straightforward. A primary challenge arises because the GitHub data often represent research groups within various institutions. Typically, details about these organizations are available on the organizations' biography pages; however, this information is largely unstructured. To tackle this, I employ a language model to parse the information and identify profiles associated with commercial entities. After cleansing the names of applicants and repository owners, I merge the datasets using exact matching on cleaned names and unique-part matching for the remaining subset, subsequently removing false matches through manual inspection. I further analyze unmatched organizations with high string similarity for potentially overlooked matches, adjusting the organization names in both datasets for initial match compatibility. Ultimately, approximately 180 organizations across the two datasets were successfully linked. For further details regarding the matching procedure and data parsing with the language model, see Appendix \ref{sec:app_data}.

\subsubsection*{Identifying LLM-Related Papers}
 To identify papers related to LLMs, a narrow keyword search was deemed insufficient due to the rapidly evolving technical vocabulary in the field, which could either omit relevant papers or yield excessive false matches. To address this challenge, I fine-tuned an LLM classifier specifically for identifying LLM-related papers through a two-step process.  Initially, two commercial LLMs, GPT-3.5 and Mixtral 8x7B, annotated a set of 20,000 out-of-sample papers \footnote{This sample originated from papers associated with unofficial code repositories on GitHub, in contrast to the main analysis focusing on papers with official repositories. To ensure reproducibility, the models' temperatures were set to zero.}. Two separate models were employed to mitigate the reliance on a single model's classification outcome. The annotated dataset then served to fine-tune the pre-trained SciBERT language model  \citep{beltagy2019scibert} for this specialized task, achieving accuracy of 0.97 and an F1 score of 0.78 on a holdout sample. Finally, I used the fine-tuned model to identify LLM-related papers in the main sample.

\subsubsection*{Patent Applications} 
I leverage the Cooperative Patent Classification (CPC) system to identify firms integrating generative language models into their innovation activities. Additionally, I use applicant information to link organizations in the patent application data with those in the GitHub dataset, containing firms contributing to the open source ecosystem of LLMs. Furthermore, I apply the CPC system to gauge firms' compatibility with LLMs and the scope of their R\&D activities involving this technology, a process I will detail in Section \ref{subsec:lta}.

\section{Measuring the Scope of Application of LLMs}
\label{sec:landscape}

The primary objective in this section is to study the scope of industrial applications of LLMs through the lens of patent data. By analyzing technological differences among firms that could potentially leverage LLMs in their R\&D processes, I aim to develop a better understanding of the environment in which open-sourcing decisions take place. To this end, I first outline the method I developed to assess the compatibility of firms' technologies with LLMs in a latent technology vector space. The analysis suggests a large number of firms in the patent application dataset have high compatibility with LLMs. Moreover, I find  significant variation in the R\&D portfolios of firms with potentially LLM-compatible technologies, suggesting a broad spectrum of industry applications for LLMs. I also examine data on the leading for-profit contributors to the open source ecosystem for LLMs and find that the majority of LLM applications extend beyond the R\&D scope of any single firm. These insights into the broad applicability of LLMs  will form a cornerstone of the theoretical framework introduced in Section \ref{sec:model}.

\subsection{Crafting the Latent Technology Space}
\label{subsec:lta}
In this subsection, I present a novel methodology to create a latent technology space by leveraging the richness of patent classification systems and the flexibility of ML techniques. The goal is to represent each firm's overall R\&D portfolio and LLMs in a high-dimensional vector space. The distance between a firm's vector and LLMs' vector will be used as a proxy for the firm's R\&D compatibility with LLMs technology. 

Mapping firms' technological positions in a vector space through patent classification has been a longstanding practice \citep[e.g.,][]{jaffe1986}. Nevertheless, the discrete and hierarchical nature of the patent classification constrains the capability of traditional methodologies to capture nuanced technological profiles of firms. For instance, Jaffe's seminal method employed an  ``ad hoc'' categorization of over 300 patent classes into 49 groups, a schema likely too coarse to discern subtle technological distinctions. To address these shortcomings, researchers have employed NLP techniques to construct more refined vector spaces from patent texts \citep[e.g.,][]{arts2021technology, hain2022text}.

However, using NLP techniques to represent technologies presents at least two major drawbacks. First, unsupervised approaches fall short of expert labeled data in capturing high quality information in complex tasks \citep{hovy2022text}. This issue is  exacerbated with new technologies, where there might not be sufficient training data available about the new technology to enable the models to create a accurate representation of the technology in an unsupervised manner\footnote{For example, in the case of LLMs, "Transformer" refers to a revolutionary architecture proposed by \cite{vaswani2017attention}. However, "transformer" can also describe the widely used electrical device for changing voltage when transferring electric energy from one alternating-current circuit to another. A subject matter expert can immediately differentiate the two upon first encounter, whereas a language model requires a substantial amount of data to capture the differences.}. The second drawback concerns resources and efficiency. Even basic NLP techniques, create challenges for researchers when applied to a large volume of patent data \citep[e.g.,][]{kelly2021measuring}. 
 
The proposed method is detailed in Algorithm \ref{alg:lta}. Initially, the method constructs a rich representation of patents by leveraging the hierarchical structure in the Cooperative Patent Classification (CPC) system\footnote{The same methodology can be applied to other popular patent classification systems, including IPC and WIPO.}. For example, consider CPC code G06F40, which denotes the handling of natural language data. This code breaks down into section G, denoting Physics; subsection G06, specifying Computing, Calculating, or Counting; and class G06F, representing Electric Digital Data Processing. Typically, a patent is classified with multiple such codes. The method converts this hierarchical structure into a flattened representation by capturing higher-order interactions among codes at the same level, enhancing the representation's richness. For instance, a patent classified with G06F40 and H04W4 is represented as [G, H, G-H, G06, ..., G06F40, H04W4, G06F40-H04W4]. This technique is analogous to incorporating \emph{n-grams} in the \emph{Bag-of-Words} representation of textual documents \citep{gentzkow2019text}. The subsequent step aggregates the patents at the firm level, creating a firm-token matrix where a firm's overall R\&D is represented by the frequency of each token (e.g., G-H) in its patent (application) portfolio.

The constructed firm-token matrix, being sparse and high-dimensional, is not immediately conducive to depicting firms' technologies. Subsequently, dimensionality reduction, following row-normalization, compresses this sparse representation into a denser, lower-dimensional matrix \footnote{In this exercise, I utilized the first four levels of the CPC system to represent firm technologies, for instance, up to G06F40 as demonstrated in the previous example. I included only applicants with at least five applications between 2019 and 2023, resulting in nearly 1.25 million patent applications. Furthermore, I discarded code combinations occurring less than five times due to their rarity. After aggregation at the firm level, the resulting matrix contained nearly 24,000 rows (each representing an applicant) and over 260,000 columns, each corresponding to a technology code combination. Ultimately, I applied truncated SVD to reduce the original matrix to 512 dimensions. This reduced matrix accounts for 69.7\% of the variance in the original matrix.}. This process builds on the classic Latent Semantic Analysis technique, introduced in \cite{deerwester1990indexing}.

\begin{algorithm}[htp]
\renewcommand{\labelenumii}{\arabic{enumi}.\arabic{enumii}}
\renewcommand{\labelenumiii}{\arabic{enumi}.\arabic{enumii}.\arabic{enumiii}}

\caption{Creating Latent Technology Space}\label{alg:lta}
\begin{enumerate}

    \item Specify $D$, the desired level of depth in the hierarchical patent classification system.
    \item Specify $n$, the max order of interaction among classification codes in the same level of hierarchy.
    \item \textit{for} $d$ in $\{1,...,D\}$ :
    \begin{enumerate}
        \item Define the set of tokens by interacting classification codes up to the $n$-th order;  
        \item Create the patent-token counts matrix;
        \item Aggregate the counts matrix at the applicant level; store for the next step; 
    \end{enumerate}
    \item Concatenate and normalize the applicant-count matrices.
    \item Apply dimensionality reduction.
    \item (optional) For a particular technology:
    \begin{enumerate}
        \item Find the related patents.
        \item Create the patent-token matrix of counts. 
        \item Sum across all patents. 
        \item Apply the transformation used in step 5.
    \end{enumerate}
\end{enumerate}

\end{algorithm}

The next goal is to determine the position of LLM technology within the created latent technology space. For this purpose, all patents citing \cite{vaswani2017attention}\footnote{This citation data is available only for granted patents.}, which introduced the Transformer architecture, a fundamental building block of LLMs, were collected. I then identified a subset of these patents associated with CPC code G06F40, which denotes handling of natural language data, and treated these as LLM-related patents. These patents were aggregated as though filed by a single hypothetical firm and were projected onto the latent technology space using the previously acquired compression transformation.

Figure \ref{fig:lta_map} illustrates a 2-D representation  of the technology space\footnote{UMAP package was used to visualize the constructed vector space. To improve the illustration, only firms 100 applications or more are displayed in the figure.}. As a sanity check to verify the proposed method is effective in capturing technology similarities among firms, the figure marks ten well-known pairs of firms with similar R\&D portfolios, including Airbus and Boeing, AstraZeneca and Pfizer, and Ford and General Motors. As shown in the figure, these paired firms are positioned in close proximity to one another on the map. Additionally, the figure plots the location of LLM-related patents. As expected, major technology companies such as Amazon, Microsoft, and Google are all positioned close to the LLM technology on the map.
\begin{figure}[ht]
\centering
\caption{A 2-D Representation of the  Latent Technology Space
\label{fig:lta_map}}
    \centering
     \includegraphics[width=0.8\textwidth]{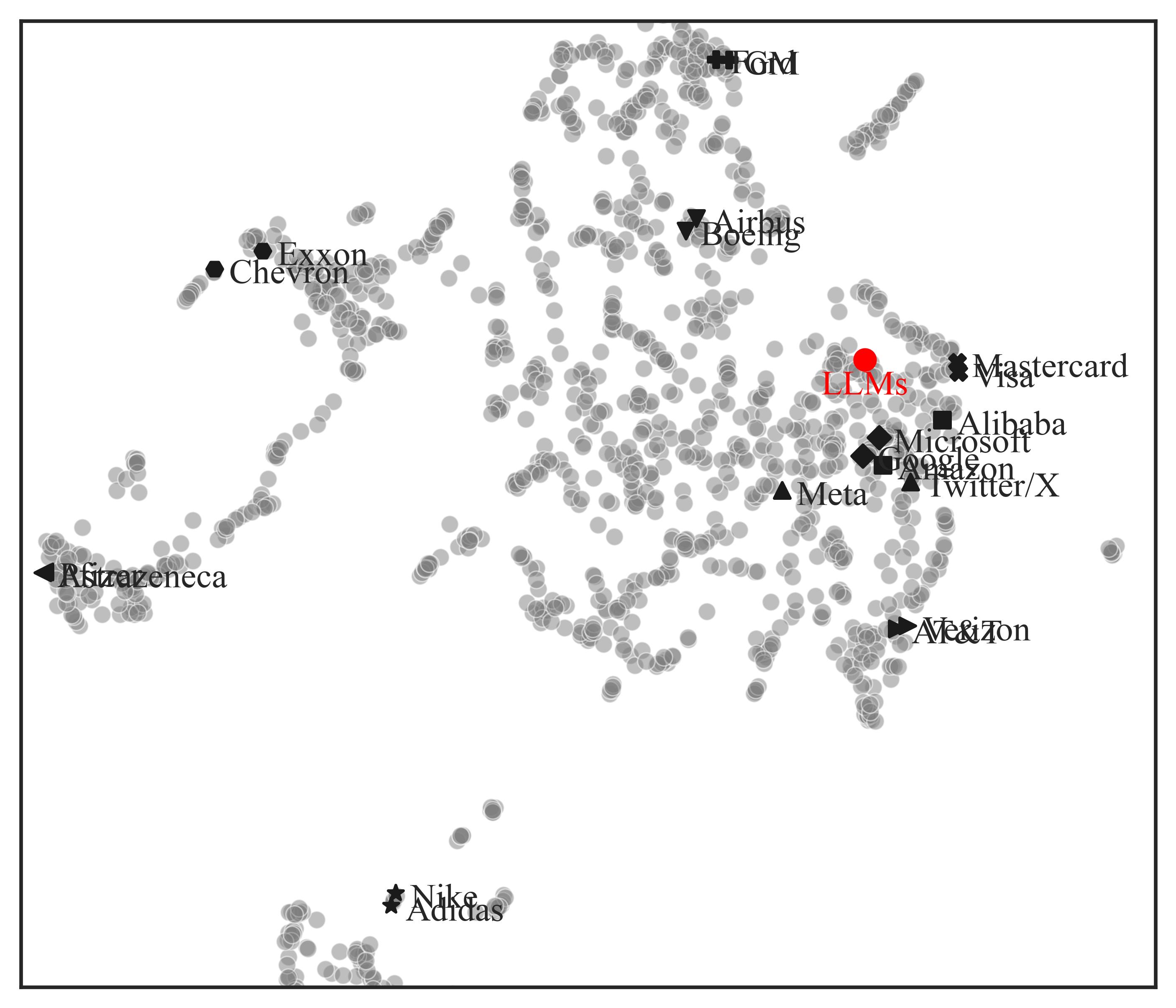}
    \hspace{0.4cm}\parbox{\textwidth}{\footnotesize{\textit{Notes:} The figure presents the projection of the constructed latent technology space in 2D. For improved illustration, only applicants with more than 100 applications are included, and a handful of outliers are omitted. Additionally, the figure plots 10 pairs of well-known firms with qualitatively similar technologies. A pooled portfolio of patents citing the ``Transformer'' paper \citep{vaswani2017attention} and related to natural language processing is represented by a red dot.}}
\end{figure}

\subsection{Firms' Compatibility with LLMs and open source Contributions}

Figure \ref{fig:tech_comp} displays the number of firms in the patent dataset that have an R\&D profile compatible with a selected subset of recent technologies, including LLMs. To obtain technology vectors (except for LLMs, whose technology vector was obtained earlier), patents in the dataset corresponding to the CPC code associated with each technology were collected\footnote{The CPC (Cooperative Patent Classification) codes corresponding to the technologies mentioned are as follows: Additive Manufacturing, B33Y10; Computer Vision, G06V10; Cosmonautics Vehicles, B64G1; Cryptocurrency, G06Q2220; Fusion Reactors, G21B; Mixed Reality, G06T19/006; Nanobiotechnology, B82Y5; Quantum Computing, G06N10; Robots, Y10S901.}. These patents were then processed and aggregated as if filed by a single entity and mapped onto the latent technology vector space, following a procedure similar to that described for LLMs. The compatibility of firms with each technology was assessed by cosine similarity between the firm's vector and the technology vectors, using a critical cosine similarity threshold of 0.7 to distinguish firms with an R\&D profile compatible with the technology from those that are not. This process identified LLMs, along with Computer Vision, Cryptocurrency, Mixed Reality, and Additive Manufacturing as technologies compatible with the R\&D processes of a relatively large number of firms. However, Quantum Computing, Fusion Reactors, Autonomous Robots, Spacecraft, and Nanobiotechnology were found to be compatible with a smaller subset of firms.

\begin{figure}[htb!]
\centering
\caption{Number of Firms with R\&D Profiles Compatible to Selected Technologies
\label{fig:tech_comp}}
    \centering
     \includegraphics[width=0.8\textwidth]{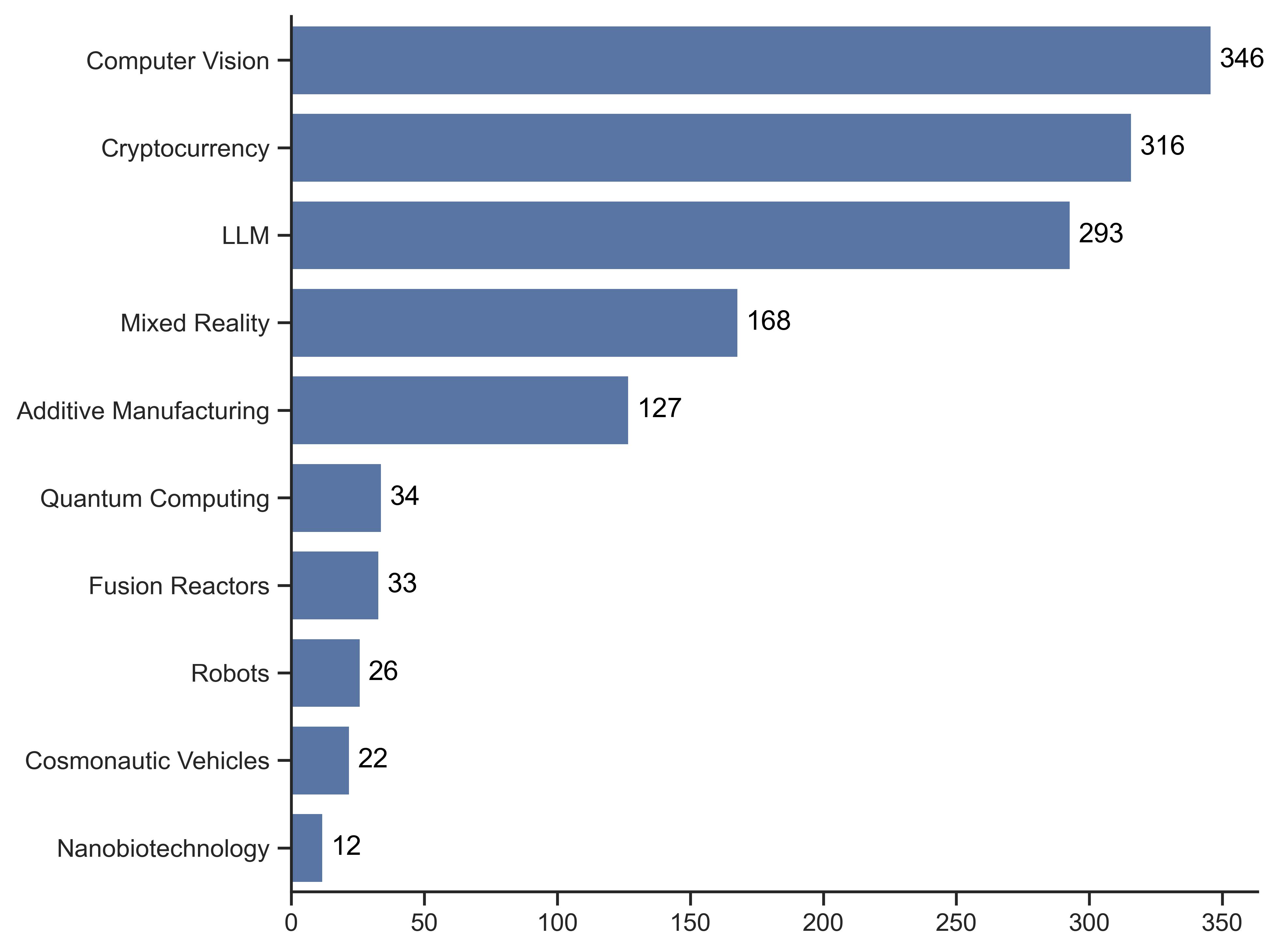}
    \hspace{0.4cm}\parbox{\textwidth}{\footnotesize{\textit{Notes:} The figure presents the number of firms in the patent application dataset with R\&D profiles compatible to the selected technologies.  A firm was considered to be compatible with a technology if the cosine similarity between its R\&D vector and the technology vector in the latent technology vector space  surpassed a critical threshold of 0.7. }}
\end{figure}

Figure \ref{fig:cross_sim} illustrates the distribution of cosine similarities between technology vectors of all pairs of firms with LLM-compatible R\&D portfolios, as identified in the previous exercise. The figure reveals substantial heterogeneity in the R\&D portfolios of firms with LLM-compatible technologies. Additionally, Figure \ref{fig:llm_sim} in the Appendix showcases 50 firms with the largest cosine similarities to LLMs in the latent technology space. Grammarly, an English writing assistance application, has the highest cosine similarity to this technology. The list also includes AI startups and established firms in various sectors, such as Accenture, Baidu, PwC, Thomson Reuters, and Xiaomi. Overall, these findings suggest that LLMs have a broad range of applications in industry, a key assumption for setting up the model in Section \ref{sec:model}.

\begin{figure}[ht!]
\centering
\caption{Distribution of Cross Cosine Similarities of Technology Vectors for LLM-Compatible Firms
\label{fig:cross_sim}}
    \centering
     \includegraphics[width=0.8\textwidth]{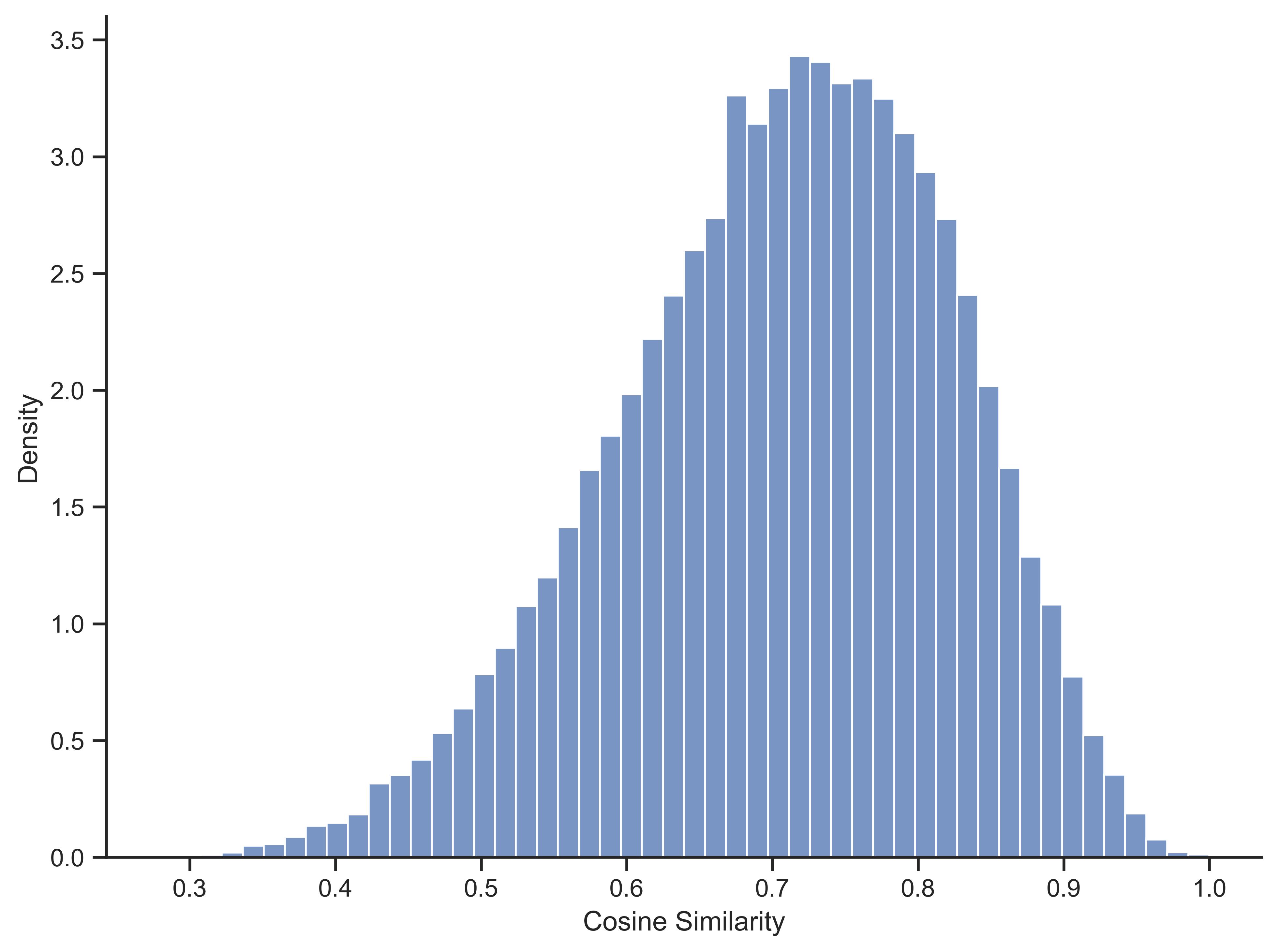}
    \hspace{0.4cm}\parbox{\textwidth}{\footnotesize{\textit{Notes:} The figure presents the distribution of cosine similarities between technology vectors of firms with LLM-compatible R\&D profiles. }}
\end{figure}

Table \ref{tab:app_repo} showcases ten companies with the most official repositories of LLM-related papers. The list includes five commonly recognized Big Tech firms: Microsoft, Google, Meta, Amazon, and Nvidia, as well as other prominent corporations including Alibaba, Salesforce, IBM, Intel, and Tencent. According to the proposed compatibility metric, all these firms have fairly high compatibility with LLMs, and yet none of them are ranked among the top 100 firms with the most LLM-compatible technologies. However, the vast R\&D portfolios of these firms, which include thousands of patent applications, imply a significant overall exposure to LLMs. Notably, IBM leads in the number of applications related to natural language generation across the dataset, with Google and Microsoft following in second and fourth places (trailing behind Capital One), and Meta taking the seventh rank. Another notable observation concerns the unique CPC codes within applications related to natural language generation. For instance, Microsoft's 34 applications in this domain encompass 141 unique CPC codes, constituting 12\% of all CPC codes in such applications. For IBM, which has the highest number of related applications, this proportion does not surpass 25\%. This observation suggests that even for leading technology firms, the majority of applications related to LLMs may fall outside their R\&D scope. Related to this observation, the theoretical analysis suggests that the incentives for open-sourcing advanced software related to a multi-purpose technology are strongest when a firm possesses an intermediate number of compatible applications.

\begin{table}[htb!]
    \centering
    \footnotesize
\caption{LLM Patent Applications by Top LLM Repository Owners}
\label{tab:app_repo}
\begin{threeparttable}
\begin{tabular}{lccccccc}
\toprule
\makecell{Company} & \makecell{LLM\\Repos} & \makecell{LLM\\Compatibility} & \makecell{Patent\\App.} & \makecell{NLG\\App.} & \makecell{NLG\\CPC} & \makecell{Share\\ CPC NLG} \\ 
\midrule
Microsoft & 199 & 0.73 & 7,752 & 34 & 141 & 0.12 \\
Google & 118 & 0.75 & 7,664 & 47 & 155 & 0.13 \\
Meta & 91 & 0.58 & 2,834 & 27 & 130 & 0.11 \\
Alibaba & 56 & 0.60 & 2,242 & 3 & 14 & 0.01 \\
Salesforce & 48 & 0.75 & 1,834 & 17 & 61 & 0.05 \\
IBM & 36 & 0.76 & 19,142 & 115 & 301 & 0.25 \\
Amazon & 30 & 0.67 & 1,840 & 4 & 15 & 0.01 \\
Nvidia & 18 & 0.70 & 1,851 & 1 & 5 & 0.00 \\
Intel & 16 & 0.52 & 11,090 & 4 & 22 & 0.02 \\
Tencent & 13 & 0.57 & 4,486 & 15 & 97 & 0.08 \\ 
\bottomrule
\end{tabular}

\begin{tablenotes}[para,flushleft]
\emph{Notes:}  The table presents selected statistics for 10 firms with the largest number of repositories of LLM-related papers (LLM Repos) on GitHub. `Transformer Sim.' denotes the cosine similarity with patents citing the Transformer paper \citep{vaswani2017attention}. `Pat. App.' refers to the number of patent applications filed by that applicant within the dataset. `NLG App.' indicates the number of applications with the CPC code G06F40/56, which denotes natural language generation. `NLG CPC' represents the total number of unique CPC codes co-occurring in natural language generation patents, and `Share CPC NLG' quantifies the firm's share of all such CPC codes.
\end{tablenotes}
\end{threeparttable}
\end{table}


\section{Empirical Analysis}
\label{sec:empirics}

\subsection{Model Quality and Open-Sourcing Decisions}
I start this section by examining how the quality advantage of LLMs over leading open source alternatives influences the developers' open source decisions. The analysis uses models from the Ecosystem dataset, provided by the Center for Research on Foundation Models (CRFM) at Stanford University, that have MMLU scores available. The MMLU is a widely-used benchmark to assess the general performance of LLMs\footnote{The Massive Multitask Language Understanding (MMLU) benchmark comprises a diverse set of natural language understanding tasks, assessing a model’s proficiency across various subjects and question types. The random guess baseline score is 25. The scores were sourced from the LMSYS Chatbot Arena Leaderboard \citep{chiang2024chatbot}, Paperswithcode.com, the Huggingace Open LLM Leaderboard, and the models' release reports; they reflect the "5-shot" performance of the models on the benchmark. }. Figure \ref{fig:mmlu} illustrates how the leading LLMs' performance on this benchmark has evolved over time.

 As displayed in Figure\ref{fig:mmlu}, there has been a persistent gap in performance quality between proprietary and open source models. Early LLMs, such as OpenAI's GPT-2, were primarily open sourced and used for research purposes. By contemporary standards, these early models had limited capabilities. GPT-3, a pioneering proprietary LLM, was significantly more advanced than its open source counterparts at the time. OpenAI's decision to adopt a proprietary release strategy for GPT-3 is aligned with the predictions of the theoretical framework in Section \ref{sec:model}, suggesting that a LLM developer will adopt a proprietary release strategy if the model's lead over its open source alternative is large enough. Since the release of GPT-3, closed LLMs have stayed ahead of the curve. Nevertheless,  high-quality open source LLMs have narrowed this gap between open and closed models\footnote{For domain specific tasks such as coding or mathematical reasoning, fine-tuned models have already achieved comparable performance to the state-of-the-art closed models For example, GPT-4 surpasses LLaMA and LLaMA-2  by wide margins on GSM8K (mathematical reasoning) and HumanEval (code generation) benchmarks. However, two LLaMA-based models, MathCoder and WizardCoder, respectively, approach or even slightly exceed GPT-4's performance on GSM8K and HumanEval, according to the models' documentation available at the time of their release.}.

 Further suggestive evidence worth noting concerns the developers of frontier models. Most top-tier closed LLMs have been released by a few organizations, particularly Google, OpenAI, and Anthropic. Considering the scaling laws of LLMs \citep{kaplan2020scaling}, a model's performance is primarily determined by its size, training data, and computational resources. Therefore, training LLMs that can outperform previous state-of-the-art models tends to be increasingly costly and out of reach for organizations without substantial resources. Nevertheless, the open source ecosystem has shown greater dynamism in releasing models that surpass previous frontiers. This greater dynamism can partly be explained by the fact that, contrary to the closed paradigm, in the open source ecosystem developers can build on each other's efforts, leading to more frequent breakthroughs in state-of-the-art models.
\begin{figure}[ht!]
\centering
\footnotesize
\caption{Quality Evolution of Frontier Open and Closed LLMs}
\label{fig:mmlu}
    \centering
        \includegraphics[width=.8\linewidth]{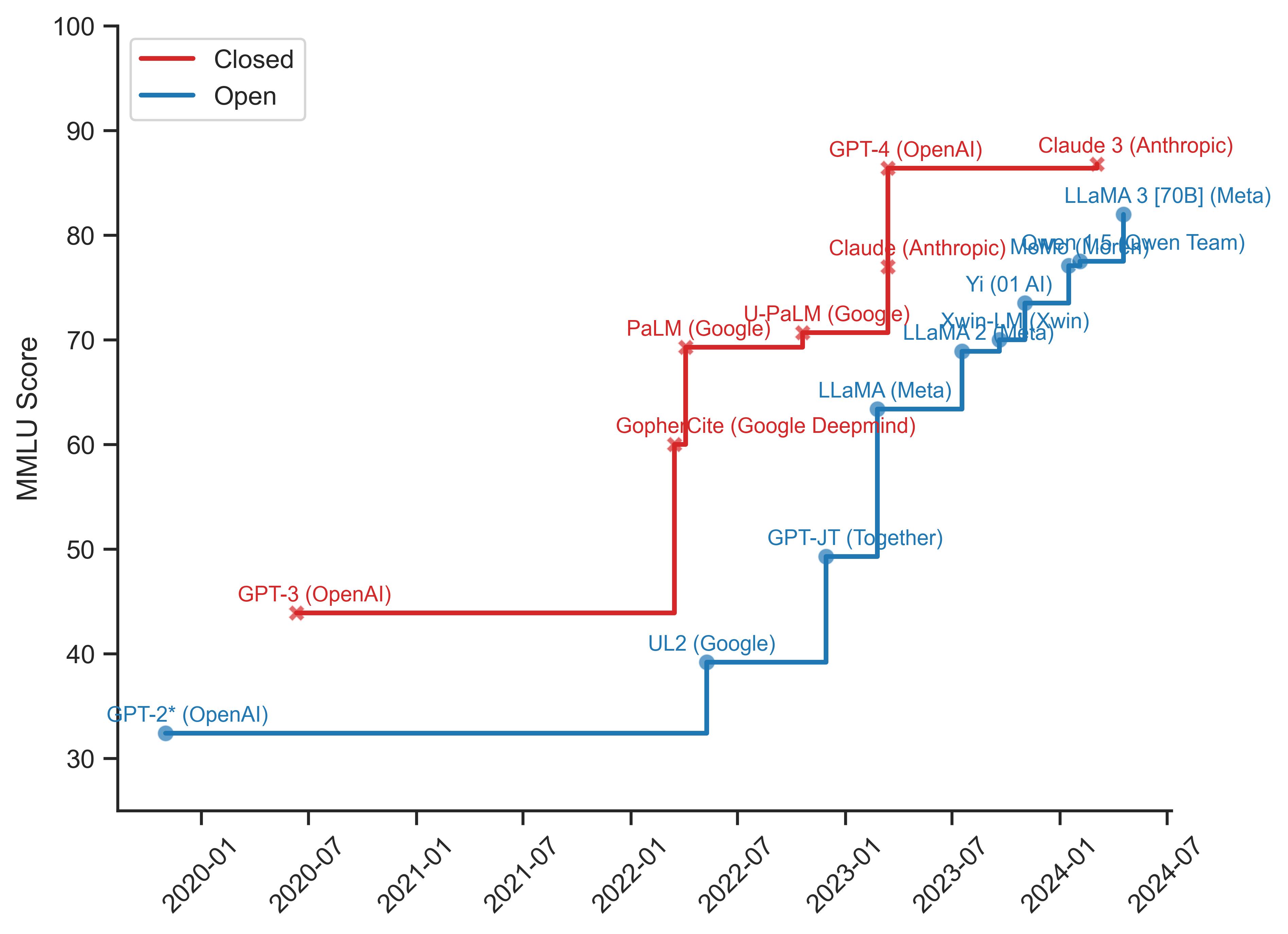}
        
    \hspace{0.4cm}\parbox{\textwidth}{\footnotesize{\textit{Notes:} The figure depicts the evolution of the performance of open and closed frontier LLMs in the CRFM data as measured by the Massive Multitask Language Understanding (MMLU) benchmark. A frontier open (closed) model is defined as a model that outperforms its preceding open (closed) models on this specific benchmark. The name of the developers are provided in parentheses. *The scores for GPT-2 reflects the score of the fine-tuned model.  }}
\end{figure}

To further examine the relationship between model quality and open-sourcing decision, consider the following regression,
 \begin{equation}
 \label{eq:reg_os}
      y_{i} = \alpha + \beta \,  \left(Q_i - Q^{*}_{O,t} \right) + \gamma X_{i} + \varepsilon_{i}
 \end{equation}
where $y_i$ is a binary outcome equal to one if model $i$ is open sourced. The main independent variable of the regression is $ \left(Q_i - Q^{*}_{O,t} \right)$ that shows the difference between the quality of model $i$ and the quality of the best available open source model. The theoretical framework predicts that $\beta$ is negative. That is, ceteris paribus, if a model surpasses its existing open source alternative by a wider margin, the owner is less likely to open source it. 

The main challenge for estimating the above regression is that there is no universally available and agreed upon measure of quality for LLMs. Even widely-used benchmarks like MMLU are available for only a subset of models in the CRFM dataset, where score availability is likely influenced by model quality. Nevertheless, if a model's performance is inferior to that of a comparable top-tier open source model, marginal quality improvements are unlikely to influence the owner’s decision to open source. Hence, my focus is on top-tier models, where benchmark score data are more readily available and the relationship between model quality and open-sourcing decisions is most relevant.

Table \ref{tab:reg_os_decision} presents the linear-probability-model (LPM) estimates of the parameter of interest $\beta$. As expected, all estimates of $\beta$ have a negative sign, suggesting that a larger gap between a model and the best available open source option decreases the likelihood that the model will be open sourced. Furthermore, the estimates of $\beta$ suggests economically significant correlations. A 10-point (out of 100) increase in the performance of the model on MMLU with respect to the best available open source option is associated with a 10-11 percentage point decrease in the likelihood of being open sourced.  The estimates of $\beta$ change only marginally after including the level of reported (predicted) model quality. The estimates of the level variable are statistically indistinguishable from zero and economically negligible, indicating that the level of model quality is only weakly correlated with the decision to open source, once the quality difference between the model and the leading open source alternative is considered.

\begin{table}[htp]
    \centering
    \footnotesize
\caption{Model Quality Lead and Open Sourcing Decision}
\label{tab:reg_os_decision}
\begin{threeparttable}
    \begin{tabular}{lcccc} \toprule
     & (1) & (2) & (3) & (4) \\ \midrule \addlinespace
    $Q - Q^{*}_{o,t}$ & -0.011*** & -0.011*** & -0.010*** & -0.010*** \\
 & (0.002) & (0.003) & (0.003) & (0.003) \\ \addlinespace
    $Q$ &  & -0.000 & 0.000 & 0.000 \\
 &  & (0.003) & (0.003) & (0.003) \\ \addlinespace
    \textit{For-Profit} &  &  & -0.138*** & -0.178*** \\
 &  &  & (0.052) & (0.060) \\ \addlinespace
    \textit{Big-Tech} &  &  &  & 0.210** \\
 &  &  &  & (0.094) \\ \addlinespace
     $R^2$ & 0.312 & 0.312 & 0.339 & 0.372\\ 
     $N$ & 86 & 86 & 86 & 86 \\ \bottomrule
    \end{tabular}
\begin{tablenotes}[para,flushleft]
\emph{Notes:} The table presents the linear probability model regression estimates of relationship between model quality and open-sourcing decision. $Q$ is the reported (estimated) quality of the model measured by reported (estimated) performance on the MMLU benchmark. $Q^{*}_{o,t}$ is the quality of the state-of-the-art open sourced model at the time of model's release. For the first open sourced model, the state-of-the-art is considered to be random guess baseline of 25. \textit{For-Profit} is a dummy variable for a for-profit developer. \textit{Big-Tech} is a dummy variable showing if the model is released by one of the following corporations:  Google,  Meta, and Microsoft. Heteroskedasticity robust standard errors are displayed in parentheses. 
\end{tablenotes}
\end{threeparttable}
\end{table}

Unsurprisingly, the coefficients for \textit{For-Profit} organizations' dummy variables in columns (3-4) are negative and statistically significant, indicating that for-profit organizations are, on average, less likely to open source their models\footnote{The status of each organization was determined manually using online sources such as firms' websites and Crunchbase. Model size—defined as the number of parameters—is documented for all open models and most closed models.}. Conversely, the coefficient for the \textit{Big-Tech} dummy variable is positive, large, and statistically significant\footnote{\textit{Big-Tech} indicates if a model is released by Google, Meta or Microsoft. Other recognized Big Tech companies do not have a model included in the dataset.}. This finding is aligned with theoretical results, predicting that, ceteris paribus, Big Tech companies are more inclined to open source their models as they have more compatible applications that can benefit from the positive spillovers of the open source community.

\subsection{Open Source as an R\&D Catalyst}
\label{sec:llama}
On February 24, 2023, Meta introduced its large language model, named LLaMA, and made the model available to researchers in academia, industry, government, and civil organizations \citep{Meta2023LLaMA}. This section studies the influence of LLaMA on the activities of LLM researchers. Using activity on GitHub as a proxy for LLM researchers' efforts, I document a significant increase in research-related activities following the release of LLaMA. I must acknowledge the difficulty in making causal claims due to the active period of LLM research around the time of LLaMA's release and the absence of direct data on researchers' use of LLaMA. Nevertheless, the main finding is highly consistent across various specifications, estimators, time horizons, and methods of identifying LLM researchers. It also withstands multiple falsification checks, suggesting a potentially causal interpretation.

A key aspect of open-sourcing LLaMA was Meta's decision to not only provide the fine-tuned assistant model but also the code and parameters of the pre-trained model\footnote{LLaMA-1 was released under a non-commercial license. However, Meta later revised it stance leading to the release of LLaMA-2 in July 2023 under a more permissive commercial license.} (see Section \ref{sec:context}). This enabled researchers to leverage a high-quality pre-trained LLM to tailor it to specific applications,  potentially stimulating further research activities around LLMs. Consequently, LLaMA was widely adopted and served as a foundation for a subsequent generation of LLMs\footnote{Meta reported that LLaMA-based models have been downloaded over 30 million times through Hugging Face, and adopted by thousands of startups, innovators, and developers \citep{MetaReportLLaMA}.}. However, it remains unclear whether open-sourcing LLaMA merely replaced prior generations of open language models or stimulated further research activity among LLM researchers. This distinction is particularly important as replacing inferior models is equivalent to a one-time upward shift in LLM technology level. However, if open-sourcing LLaMA had a positive influence on R\&D activities around LLMs, it could amplify the growth rate of the technology beyond having a positive impact on its level.  
\subsection*{Data}
Given the notable surge in research on LLMs in 2023, a high-frequency measure of activity is essential to isolate the impact of specific events. Therefore, traditional metrics like the numbers of patent applications or publications, which are recorded with significant delays, are not appropriate proxies in this scenario. Consequently, this study utilizes weekly counts of \emph{contributions} on GitHub as an indicator of research activity\footnote{While users can opt to conceal this information, such instances are rare.}. GitHub defines several activities as contributions, with the primary method being code modifications in a repository (commit). Other forms include code reviews, issue management, and pull requests\footnote{Pull requests propose incorporating changes from one branch to another, usually for code review and integration before merging.}. GitHub also provides information on commits to public repositories, offering a more accurate measure of contributions to open research efforts.

I identified contributors to repositories of LLM-related papers as LLM researchers. To establish a control group, I selected a subset of GitHub users presumably unaffected by the open-sourcing of LLaMA. Given LLMs' significant impact on the broader AI research community, using AI researchers without LLM-related papers for the control group was deemed implausible. Therefore, I identified 20 major repositories on GitHub not directly related to AI, and subsequently collected information of all contributors to those repositories to serve as the control units\footnote{ Each chosen repository ranks as the most popular under a specific Topic on GitHub, determined by the number of \emph{Stars}. Repositories solely providing educational materials were omitted. Description of these topics and their corresponding repositories is provided in Appendix \ref{sec:app_data}}. While it is not possible to verify that open-sourcing LLaMA had no influence on the activities of this group of GitHub users, it is hard to think of a possible scenario in which open-sourcing LLaMA had negative impact on the activity among users in the control group\footnote{The raw data shows a slight increase in the average number contributions among such users in the post period.}. As a result, to the extent that open-sourcing LLaMA had a positive effect on the overall activity of users in this group, the results would underestimate the true influence of LLaMA on the activities of LLM researchers.

Similar to other platforms, users on GitHub often include biographical details on their profile pages. These largely unstructured biographies typically feature information about their location, as well as affiliations with universities, companies, or organizations. Given the impracticality of manually inspecting the vast number of profiles and the lack of structured data for accurate pattern-based processing, I employed an LLM for data parsing. Specifically, the LLM was tasked with extracting users' countries, their current sector of employment (Academia or Industry), and the names of affiliated organizations, provided this information was available. A manually inspected sample confirmed the LLM's qualitative performance. Further details on utilizing the LLM for data processing can be found in Appendix \ref{subsec:parse_llm}. In total, the profiles of over 63 thousand AI researchers were analyzed. The models' prediction suggested that ca. 53\% of these researchers work in Academia, 24\% work in Industry, and 22\% did not disclose this information. 

\subsection*{Results}

Consider the following event-study regression,
 \begin{equation}
 \label{eq:event}
      y_{i,t} = \alpha_i + \tau_t + \beta_{i,t} \, Treated_{i,t} + \varepsilon_{i,t}
 \end{equation}
where $y$ represents the relative deviation from the mean pre-event contribution  for user $i$ at time $t$, specifically, $(c_{i,t} - \Bar{c}_{i,pre})/\Bar{c}_{i,pre}$, where $c_{i,t}$ is user $i$ contributions at time $t$. This transformation allows interpreting the treatment effect as the average activity level change among affected researchers.  Using the raw counts of contributions yields a less intuitive interpretation and skew the results toward the highly active users. Raw contribution counts, while less intuitive and biased towards highly active users, do not alter the robustness of the findings when used as the outcome variable. The analysis includes LLM researchers with at least one pre-period contribution as treated units and non-AI repository contributors as controls. Researchers employed by Meta were excluded. The dataset comprises approximately 6,400 treated and 4,800 control group individuals, respectively. 

The goal here is to demonstrate that open-sourcing can stimulate research activities, rather than quantifying the precise impact of LLaMA on the GitHub contributions of LLM researchers. Contributions on GitHub serve primarily as a proxy for research activity. Therefore, even if a precise treatment effect of LLaMA's release on GitHub contributions could be estimated, its significance would be limited. Additionally, limitations in the data and identification strategy prevent strong causal claims regarding the treatment effect. Despite these limitations, the findings suggest that open-sourcing an advanced model can do more than replace inferior models; it may catalyze research activity within the community, potentially leading to further advancements and a snowball effect that accelerates technological growth in the field.

Figure \ref{fig:event} presents the estimates from the aforementioned event-study regression. The analysis spans a 21-week interval, with Week 0 defined as the seven days following LLaMA's release on February 24. Notably, the results reveal a moderate pre-trend in the activities of LLM researchers, in comparison to the control group. However, immediately after the model's release in Week 0, a significant decline in GitHub activities among LLM researchers is observed. This pattern is likely attributable to researchers allocating time to explore the new model rather than contributing to their existing projects. Subsequent to Week 0, LLM researchers' contributions exhibit an upward trend, stabilizing several weeks later.
\begin{figure}[ht]
\centering
\footnotesize
\caption{Impact of LLaMA on Contributions of LLM Researchers}
\label{fig:event}
    \centering
    \includegraphics[width=0.8\linewidth]{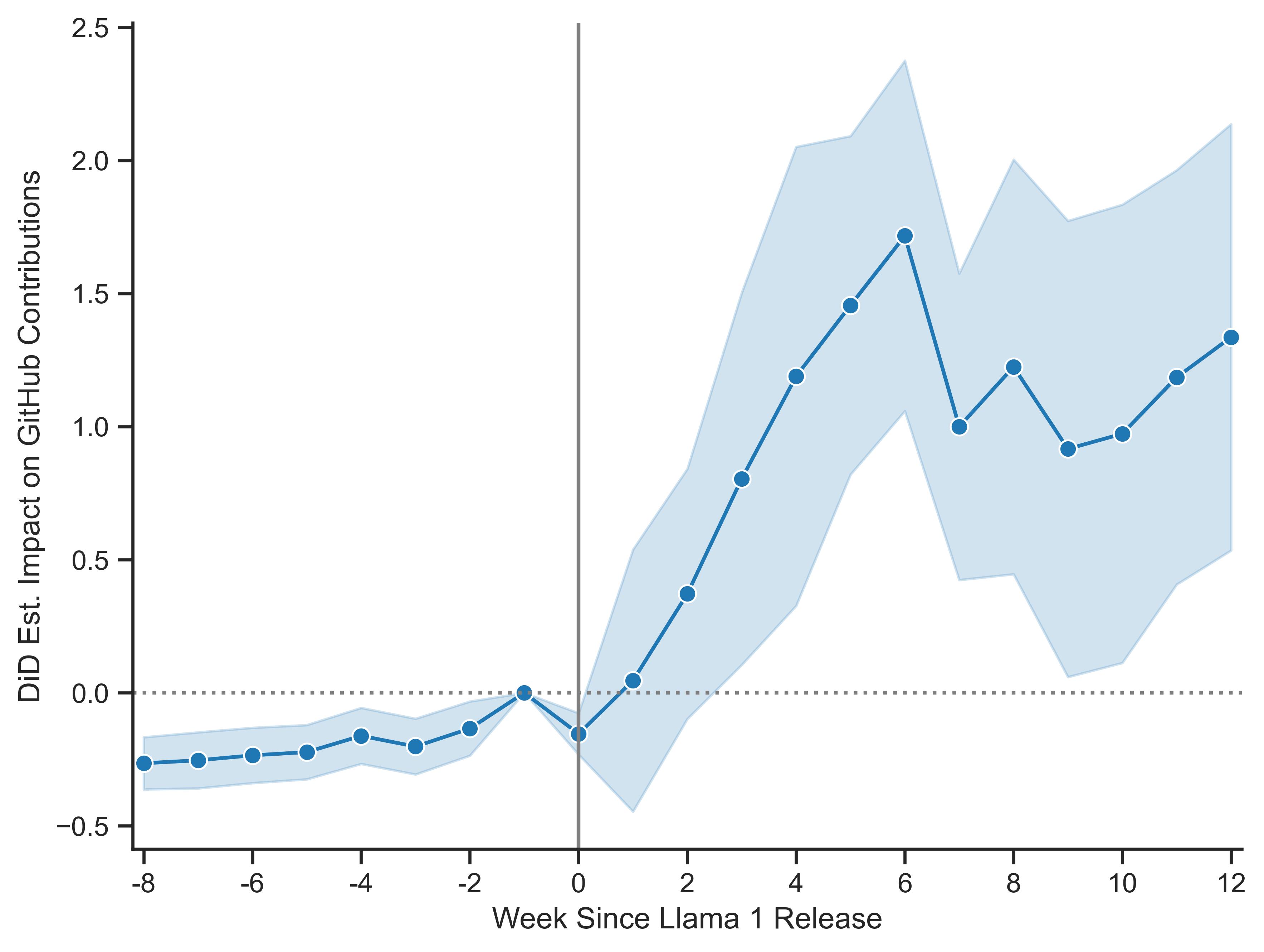}
    \hspace{0.4cm}\parbox{\textwidth}{\footnotesize{\textit{Notes:} The figure plots the coefficients from the event-study regression, as described in equation \ref{eq:event}. The dependent variable is the relative deviation of contributions from their mean pre-event level, defined as $y_{it} = (c_{i,t} - \Bar{c}{i,pre})/\Bar{c}{i,pre}$. The vertical line marks the introduction date of LLaMA. The shaded areas denote 95 percent confidence intervals, calculated based on standard errors that are clustered at the individual level.
    }}
    
\end{figure}

Table \ref{tab:did_contrs} presents the Difference-in-Differences (DiD) estimates of the impact of LLaMA's open-sourcing on GitHub contributions by LLM researchers (outcomes in Week 0 have been omitted from the analysis). The estimates reveal a substantial and statistically significant increase in the contributions of LLM researchers after LLaMA's release. This increase is observed for both academia and industry researchers. Including group-specific linear trends has negligible effects on the results, suggesting that the estimates are not primarily driven by pre-trends. The robustness analysis employs the Synthetic DiD estimator proposed by \cite{arkhangelsky2021synthetic} to account for complex pre-trend patterns. The findings are closely aligned with the baseline estimates. 

Furthermore, to ensure robustness, the analysis was narrowed to a 30-day period before and a 17-day period after the model's release and the results were produced using daily contributions data. This was to  done to isolate the estimates from the potential effects of GPT-4's release in March 14. The findings continue to indicate a sizable and significant increase in the contributions of LLM researchers after LLaMA's release, albeit smaller than the baseline figures. Such a reduced short-term effect was anticipated, as event-study estimates highlighted an increasing momentum in LLM researchers' activity levels post-LLaMA's release. Overall, estimates suggest a 40-140\% increase in LLM researchers' contributions after LLaMA's release, varying with the time horizon and estimation methodology.

\begin{table}[htp!]
    \centering
    \footnotesize
\caption{Impact of LLaMA on Activity of LLM Researcher on GitHub}
\label{tab:did_contrs}
\begin{threeparttable}
    \begin{tabular}{lcccccc}
    \toprule
     & (1) & (2) & (3) & (4) & (5) & (6) \\ 
     & All & All & Academy & Academy & Industry & Industry \\ \midrule
    ATT & 1.202*** & 1.338*** & 1.352*** & 1.453*** & 0.889** & 1.127* \\
        & (0.196) & (0.238) & (0.219) & (0.269) & (0.401) & (0.651) \\
     \addlinespace
    Obs. & 224,240 & 224,240 & 166,520 & 166,520 & 128,000 & 128,000 \\
    $R^2$ & 0.006 & 0.006 & 0.007 & 0.007 & 0.004 & 0.004 \\
    N. Ind. & 11,212 & 11,212 & 8,326 & 8,326 & 6,400 & 6,400 \\
    Ind. FE & Y & Y & Y & Y & Y & Y \\
    Time FE & Y & N & Y & N & Y & N \\
    Linear Trend & N & Y & N & Y & N & Y \\
    Trend x Treat & N & Y & N & Y & N & Y \\ \bottomrule
    \end{tabular}
\begin{tablenotes}[para,flushleft]
\emph{Notes:} The table presents the Difference-in-Differences estimates of the impact of LLaMA on the activity of LLM researchers on GitHub. The dependent variable is the relative deviation of weekly contributions from their mean pre-event level. The outcomes for Week 0 (the first seven days after LLaMA's announcement) are omitted. `Academy' indicates the group of LLM researchers whose GitHub profiles indicate that they are working in academia, and `Industry' indicates the estimates for LLM researchers whose GitHub profiles indicate they are employed in the industry. Cluster-robust standard errors are displayed in parentheses. 
\end{tablenotes}
\end{threeparttable}
\end{table}

Table \ref{tab:reg_counts} presents the TWFE and Poisson regression estimates of changes in contributions by LLM researchers on GitHub, as measured by raw counts of contributions. The results from both estimators are aligned with the baseline results, implying that open-sourcing LLaMA made a positive influence on LLM researchers' activities on GitHub.

\begin{table}[ht]
\centering
\footnotesize
\caption{Impact of open source on GitHub Contributions}
\label{tab:reg_counts}
\begin{threeparttable}
    \begin{tabular}{lcccccc}
    \toprule
    & (1) & (2) & (3) & (4) & (5) & (6) \\
    & All & All & Academy & Academy & Industry & Industry\\
    & TWFE & Poisson & TWFE & Poisson & TWFE & Poisson \\ \midrule
    ATT & 0.780*** & 1.133*** & 0.801*** & 1.168*** & 0.795** & 1.102*** \\
        & (0.270) & (0.0261) & (0.292) & (0.0367) & (0.404) & (0.0402) \\
    \addlinespace
    Obs & 224,240 & 224,080 & 166,520 & 166,420 & 128,000 & 127,880 \\
    $R^2$ & 0.004 &  & 0.004 &  & 0.004 &  \\
    N. Ind. & 11,212 & 11,204 & 8,326 & 8,321 & 6,400 & 6,394 \\
    Ind. FE & Y & Y & Y & Y & Y & Y \\
    Time FE & Y & Y & Y & Y & Y & Y \\
    Trend & N & N & N & N & N & N \\
    Trend x Treat & N & N & N & N & N & N \\
    \bottomrule
    \end{tabular}
\begin{tablenotes}[para,flushleft]
\emph{Notes:} The table presents the Difference-in-Differences estimates of the impact of LLaMA on the total weekly contributions of LLM researchers on GitHub. The dependent variable is total weekly contributions. TWFE denotes Two-way fixed-effects estimator, and Poisson denotes the Poisson regression with conditional fixed-effects. The outcomes for Week 0 (the first seven days after LLaMA's announcement) are omitted. Academy' indicates the group of LLM researchers whose GitHub profiles indicate that they are working in academia, and Industry' indicates the estimates for LLM researchers whose GitHub profiles indicate they are employed in the industry. Cluster-robust standard errors are displayed in parentheses.
\end{tablenotes}
\end{threeparttable}
\end{table}

Table \ref{tab:did_commits} presents the Difference-in-Differences (DiD) estimates for changes in commit activity to public repositories by LLM researchers. The findings show a significant increase in public contributions on GitHub by LLM researchers following the release of LLaMA. Further analysis indicates that this increase was primarily in the academic sector, with no notable change in public commit activity among industry-based LLM researchers.

\begin{table}[htp]
    \centering
    \footnotesize
\caption{Impact of open source on Public Contributions}
\label{tab:did_commits}
\begin{threeparttable}
    \begin{tabular}{lccc}
    \toprule
     & (1) & (2) & (3) \\ 
     & All & Academy & Industry \\ \midrule
    ATT & 0.372*** & 0.428*** & 0.274 \\
     & (0.133) & (0.145) & (0.199) \\
     \addlinespace
     Obs. & 224,240 & 166,520 & 128,000 \\
    $R^2$ & 0.001 & 0.001 & 0.001 \\
    N. Ind.& 11,212 & 8,326 & 6,400 \\
    Ind. FE & Y & Y & Y \\
    Time FE & Y & Y & Y \\ \bottomrule
    \end{tabular}
\begin{tablenotes}[para,flushleft]
\emph{Notes:} The table presents the Difference-in-Differences estimates of the impact of LLaMA on weekly public contributions of LLM researchers on GitHub. The outcomes for Week 0 (the first seven days after LLaMA's announcement) are omitted. `Academy' indicates the group of LLM researchers whose GitHub profiles indicate that they are working in academia, and `Industry' indicates the estimates for LLM researchers whose GitHub profiles indicate they are employed in the industry. Cluster-robust standard errors are displayed in parentheses. 
\end{tablenotes}
\end{threeparttable}
\end{table}

\subsection*{Robustness Analysis}

\subsubsection*{Non Linear Pre-Trends}
Table \ref{tab:did_contrs} rules out the possibility that linear pre-trends drive the estimates. Table \ref{tab:reg_sdid} presents Synthetic DiD estimates of the treatment effects, offering flexible control over higher-order pre-trends in outcomes between treated and non-treated units. The results align closely with the baseline DiD estimates.

\subsubsection*{Simultaneous Shocks}
The first half of 2023 witnessed significant research activity in LLMs. DiD estimates are susceptible to other sources of shocks potentially influencing LLM researchers around the time of LLaMA-1 being open sourced. Notably, GPT-4, the successor to GPT-3.5 (ChatGPT), was released in mid-March 2023. Despite being an enhanced version of its closed predecessor, GPT-4 introduced additional features, such as multimodality\footnote{Multimodality refers to the ability of a single model to process and generate multiple types of data.} and advanced code generation capabilities, which could impact research activities. To isolate the results from GPT-4's potential influence, I use daily data and narrow the time window to 30 days before LLaMA's release and one day prior to GPT-4's release. The short-term estimates, presented in Table \ref{tab:reg_daily},  remain statistically and economically significant, but are smaller in size compared to the baseline values,  likely due to the impact of open-sourcing LLaMA not being fully realized in the few weeks after its release.

In addition, Figure \ref{fig:event_daily} plots the estimated event-study coefficients of Equation \ref{eq:event} using daily data around the day of open-sourcing LLaMA. The figure does not reveal any significant increase in the contributions of LLM researchers following the release of GPT-4, and the overall pattern suggests that the trend in contributions remains relatively consistent after open-sourcing LLaMA. Overall, the robustness analysis indicates that the release of GPT-4 is unlikely to account for the main findings.

\begin{figure}[ht]
\centering
\footnotesize
\caption{Short-Term Impact of LLaMA on Contributions of LLM Researchers
\label{fig:event_daily}}
    \centering
    \includegraphics[width=0.75\linewidth]{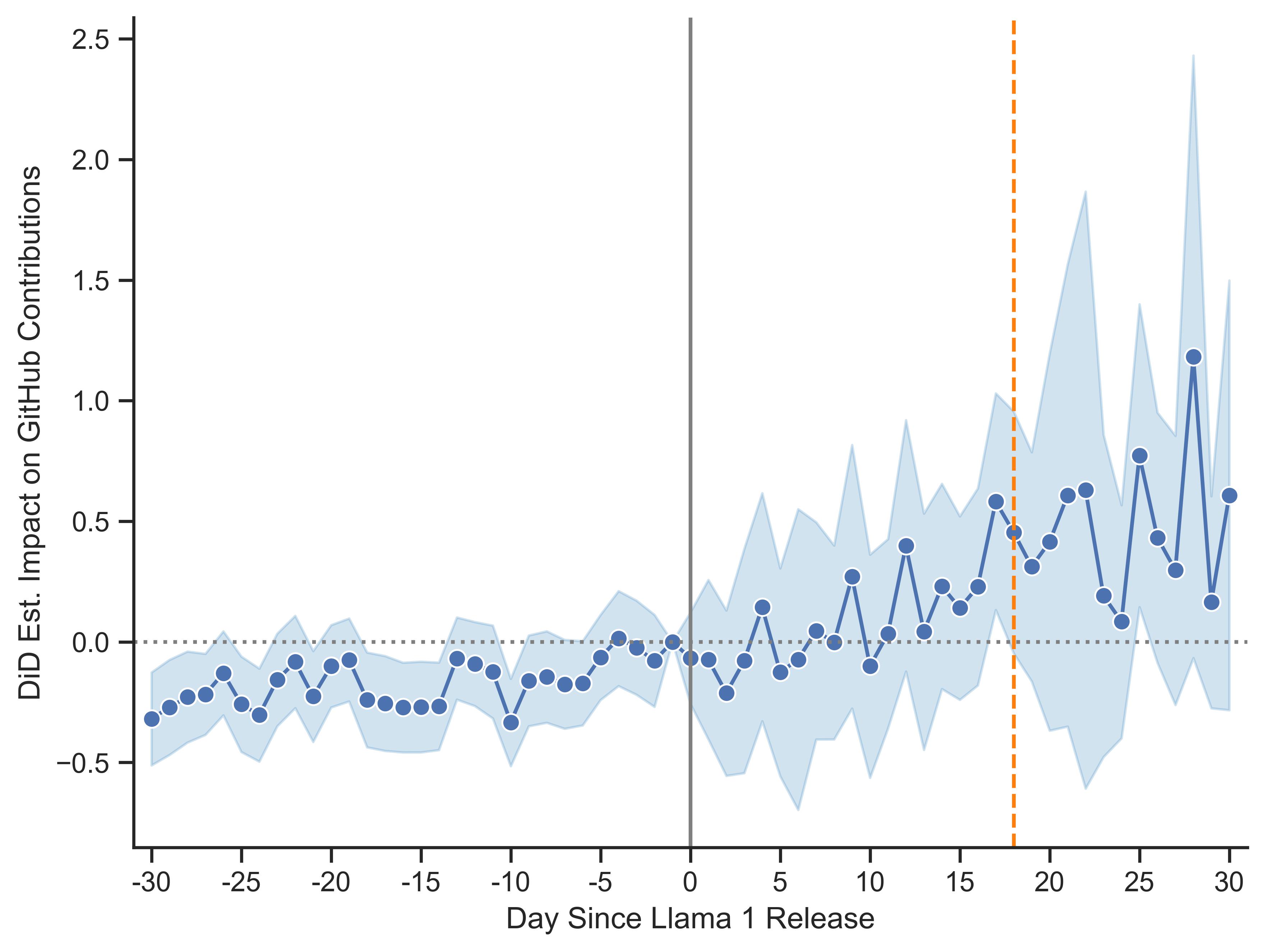}
    \hspace{0.4cm}\parbox{\textwidth}{\footnotesize{\textit{Notes:} The figure plots the coefficients from the event-study regression, as described in Equation \ref{eq:event}. The dependent variable is the daily relative deviation of contributions from their mean pre-event level, defined as $y_{it} = (c_{i,t} - \Bar{c}{i,pre})/\Bar{c}{i,pre}$. The solid vertical line (in grey) marks the introduction date of LLaMA. The dashed vertical line (in orange) denotes the release of GPT-4. The shaded areas represent 95 percent confidence intervals, based on clustered robust standard errors.
    }}

\end{figure}

Submission deadlines of major AI conferences could potentially be another source of shocks. However, such shocks are unlikely to explain the steep rise in the number of contributions after open-sourcing LLaMA and its stabilization after passing 5-6 weeks, as displayed in Figure \ref{fig:event}. Nevertheless, data on the submission deadlines of two major NLP conferences (ACL and EMNLP) and two major general AI conferences (ICML and NeurIPS) were collected. The submission deadlines for ACL and ICML fell in late January, before the open-sourcing of LLaMA, while EMNLP's deadline was in late June, a few weeks after the end of the study period. The submission deadline for NeurIPS on May 17, coinciding with Week 11 in Figure \ref{fig:event}, appears to be the only potential concern. However, given that the estimate for Week 12 appears to be as large as the estimate for Week 11, it seems unlikely that this deadline had a meaningful impact on the main findings.
 
\subsubsection*{Alternative Categorization of the Treated Group}
Table \ref{tab:reg_treat} demonstrates the robustness of the results to alternative methodologies for identifying researchers in the treated group. First, papers utilizing the term `language model' in their titles or abstracts are identified, and the contributors to the repositories associated with these papers are subsequently recognized as the treated group. In the second strategy, a text-based clustering method is employed to discern the largest cluster associated with natural language processing\footnote{Specifically, K-means clustering is applied to the combined titles and abstracts of papers after dimensionality reduction (utilizing TSVD+UMAP) on the TF-IDF matrix of pre-processed texts. The most frequent cluster among the papers presented at two conferences dedicated to Computational Linguistics (ACL) and Natural Language Processing (EMNLP) is then identified.}. Contributors to repositories of papers in the cluster linked to NLP are then determined as the treated group. The results under both methodologies align closely with the baseline estimates.

\section{Theoretical Analysis}
\label{sec:model}
This section introduces a model that explores the dynamics of AI software development and the decision-making process regarding open sourcing. It concentrates on a handful of factors considered to have primary importance, which can be readily incorporated into simple growth models. The goal is to develop a straightforward dynamic discrete choice model to analyze the decisions of a tech firm concerning the open-sourcing a LLM, which serves as an input for producing software applications.

\subsection{Environment}
\subsubsection*{LLMs as a GPT}
The model positions LLMs as an enabling technology that enhances profits in the downstream software application sector (AS). The AS comprises a unit mass of software producers, each differing in LLM-compatibility.  Profits from LLM-integration in the AS are contingent on the language model quality, and the amount of \emph{compute}\footnote{Compute denotes the total computing resources utilized for algorithm training and execution.} used to integrate the model with that application. Furthermore, using compute improves the model's quality for the subsequent periods.

In the AS, software producers are uniformly distributed across the unit interval ($x_i \sim U[0,1]$), and in each period, they choose to use one type of LLM with quality $q_{\tau, t}$.  The profit $\pi_{i,\tau,t}$ for producer $i$, at time $t$, spending $k_{i,t}$ units of compute when using model $\tau$  with quality $q_{\tau,t}$ is: 

\begin{equation}
    \pi_{i,t, \tau} = e^{-\gamma x_i} \left(q_{\tau,t} k_{i,t} \right)^\alpha - k_{i,t} - P_{\tau,t}
\label{eq:profit}
\end{equation}
where $e^{-\gamma x_i}$ represents the producer's compatibility with LLMs and $P_{\tau,t}$ is the license price of type $\tau$ model in period $t$. 

A LLM is either proprietary/closed or open source. An open source model is freely accessible to all producers, while proprietary model is priced by its owner at the beginning of each period. Moreover, while only the owner can improve quality of the closed model, the open source model benefits from collaborative enhancements by all software producers. There are two types $\tau$ of models, $A$ and $B$. Type $B$ is open sourced with $P_{B,t}=0$ for all $t$, while type $A$ is developed and owned by tech firm $\mathbf{A}$.
\subsubsection*{The Tech Firm}
The model includes a (big) tech firm $\mathbf{A}$, which owns and controls all software producers in the application sector within the range $x_i \in [0, m]$. While subsequent analysis will explore $\mathbf{A}$'s decision in developing its own LLM, it is initially assumed that at $t=t_0$, $\mathbf{A}$ is endowed with model $A$ with superior quality compared to the open source alternative. From $t \geq t_0$ onwards, Firm $\mathbf{A}$ can choose to irreversibly open source its model, under the condition $q_{A,t} \geq q_{B,t}$\footnote{This assumption is without loss of generality, as open-sourcing lower-quality model would not influence external producers' choices.}.

Open-sourcing allows for external contributions, potentially accelerating the quality growth of model $A$ and, consequently, increasing $\mathbf{A}$'s profits from internal producers. Alternatively, Firm $\mathbf{A}$ can maintain a closed API, licensing the model to external producers for direct revenue, but this approach excludes the possibility of external quality improvements from the open source community. The choice between rapid quality enhancement and direct  monetization presents a strategic dilemma in open-sourcing decisions. 

\subsubsection*{External Producers}
Producers with $x_i \in (m,1]$ that are not controlled by firm $\mathbf{A}$ face a static profit-maximizing problem each period. They evaluate the qualities and API  prices of all available LLMs types and choose the model that maximizes their profits, based on their optimal compute unit for the selected model type. Specifically they solve,

\begin{equation*}
     \pi_{i,t, \tau} = \max_{\tau,k_\tau} \left\{ \left(e ^{-\gamma x_i} \left(q_{A,t} k_{i,A,t}\right)^\alpha - k_{i,A,t} - P_{A,t} \right), \left( e^{-\gamma x_i} \left(q_{B,t} k_{i,B,t} \right)^\alpha - k_{i,B,t} \right) \right\}
\end{equation*}
assuming free access to open source model $B$. If Firm $\mathbf{A}$ chooses to open source the higher-quality model $A$, the focus for external producers shifts to optimizing compute levels $k_{A,t}$ for $A$.

When $\mathbf{A}$ offers its model through an API, external producers compare the license fee $P_{t}$ (simplifying the subscript $A$ in $P_{A,t}$) with the quality disparities between $A$ and $B$ in their decision-making process. This results in the following demand relation for the API of model $A$:
\begin{equation}
    Q_{A,t} =  \frac{1}{\gamma}  \left[\ln\left(\alpha \left(q_{A,t}^{\alpha/(1-\alpha)} - q_{B,t}^{\alpha/(1-\alpha)}\right)^{1-\alpha}\right) - (1 - \alpha) \ln\left(\frac{\alpha P_t}{1 - \alpha}\right)\right] - m
    \label{eq:api_demand}
\end{equation}
The demand relation implies that the demand for model $A$ increases with its quality lead over model $B$, and decreases with its price $P$. Note that as Firm $\mathbf{A}$'s internal producers have an open access to the model, their mass, denoted by $m$, is subtracted from the demand relation. For a detailed derivation of the relationship, see Appendix \ref{sec:app_model}.

\subsection{Open Source Decision}

Firm $\mathbf{A}$'s decision is to either irreversibly open source model $A$ or maintain its closed status for another period. Consequently, the firm must weigh the value of open model, $V^O(q)$, against the value of closed model, $V^C(q,q_B)$, where $q$ denotes the quality of model $A$. Notably, the value of open source model, $V^O(q)$, does not depend on the quality of its open source alternative, as $q \geq q_B$ implies that every producer would opt for $A$ if it were open sourced. However, the value of the closed model, $V^C$, is influenced by $q_B$, since the demand for $A$'s API hinges on both the quality of $A$ and its open source counterpart.

This formulation of the firm's decision-making process aligns well with a dynamic programming model featuring a discrete choice. Consequently, the value of the open model is derived as a solution to the following Bellman equation:
\begin{equation}
\centering
\begin{gathered}
       V^{O}(q) = \max_{k(x)} \left[ \int_{0}^{m} \pi(q, k(x)) \, dx + \beta V^{o}(q') \right] \\
\text{s.t.} \quad q' = q + \psi K_A + \phi K_{-A} 
\end{gathered}
\label{eq:V_open}
\end{equation}
where $\beta$ is the time discount factor, $\psi K_A$ and $\phi K_{-A}$ reflect the contribution of internal and external computes to the model’s quality in the subsequent period, with $\phi, \psi \in (0,1]$. The integral on the right-hand side of the equation denotes the total profit generated by all software producers owned by Firm $\mathbf{A}$ in the application sector. Equation \ref{eq:V_open} implies that, in any period, when $k(x)$ is chosen optimally, $\mathbf{A}$’s valuation of open model with quality $q$ equals the immediate profit generated by its producers using the model, plus the discounted value of the enhanced model with improved quality $q'$.

The maximization problem on the right-hand side of Equation \ref{eq:V_open} can be simplified with respect to the optimal compute function $k(x)$. It is important to note that the transition equation depends only on the aggregate internally used compute $K_A$, and not on the distribution of compute among $\mathbf{A}$'s producers. Consequently, when $K_A$ is optimally chosen, profit maximization implies that marginal profit with respect to $k$ must be equal for any two producers $x_i, x_j \in [0,m]$ owned by $\mathbf{A}$. After some algebraic manipulation,  Firm $\mathbf{A}$'s aggregate production profit in terms of $K_A$, can be expressed as:
 \begin{equation*}
      \Pi^F(q,K_A) = \Theta \left( q K_A \right)^\alpha - K_A
 \end{equation*}
 where $\Theta$ is a constant term determined by the parameters $\alpha$, $\gamma$ and $m$. For further details on derivations, see Appendix \ref{sec:app_model}. 
 
Subsequently, Equation \ref{eq:V_open}  can be simplified as
\begin{equation}
\centering
\begin{gathered}
       V^{O}(q) = \max_{K_A} \left[ \Pi^F(q,K_A) + \beta V^{o}(q') \right] \\
\text{s.t.} \quad q' = q + \psi K_A + \phi K_{-A} 
\end{gathered}
\label{eq:V_open_simp}
\end{equation}

The value of the closed model $V^C(q, q_B)$ depends not only on $\Pi^F(q, K_A)$ but also on the profit from model $A$'s API, $\Pi^A(q,q_B,P) = P Q_{A,t}$, where $Q_{A,t}$ is the demand for the model's API as given by Equation \ref{eq:api_demand}. Therefore, the Bellman equation for the closed LLM is\footnote{The equation presents a simplified version of the firm's problem modeled in the numerical analysis. To analyze the model for the complete set of states for $q_A$ and $q_B$, the complete problem introduces another option which is possibility of switching from own model to alternative open source LLM after paying some adjustment cost, specifically when $q_A < q_B$. However, as the analysis concentrates on cases where $q_A \geq q_B$, this simplification does not impact the firm's decision regarding open sourcing in the deterministic case. Intuitively, it is not optimal for the firm to forfeit the opportunity to open source its higher-quality model and set the API price so high that it eventually necessitates incurring adjustment costs to switch to an alternative open source LLM. This intuition is also confirmed by the results of the numerical analysis of the model.}:
\begin{equation}
\centering
\begin{gathered}
       V^{C}(q, q_B) = \max_{K_A,P} \left[ \Pi^F(q,K_A) + \Pi^A(q,q_B,P) + \beta \max \left\{ V^{O}(q'), V^C(q', q_B')\right\}   \right] \\
\text{s.t.} \quad q' = q + \psi K_A \\
\text{\& }   \quad q'_B = q_B + \phi K_{B} 
\end{gathered}
\label{eq:V_closed}
\end{equation}

Equation \ref{eq:V_closed} implies that if $\mathbf{A}$ decides to keep its model closed, it gains profits from both production and its model's API, while retaining the option to open source or keep the model closed in the next period, hence obtaining the discounted value of $$\max\{ V^{O}(q'), V^C(q', q_B')\}$$ However, since the model remains closed, its quality in the next period $q'$ only increases due to internally used computes $K_A$. Additionally, the compute used by external producers using model $B$ enhances the next period's quality of $B$ by $\phi K_B$. Consequently, when setting the API price, Firm $\mathbf{A}$ must consider its impact on immediate profits and its future implications via the channel of $q_B$. An increase in $P$ not only alters $\Pi^A$ immediately but also affects future profits as producers switching from $A$'s API to the open source alternative $B$ will contribute to the growth of model $B$, thereby affecting the demand for $A$'s API in all subsequent periods.

Finally, Firm $\mathbf{A}$'s decision to open source can be modeled as choosing the option with the highest value:
\begin{equation*}
    V(q,q_b) = \max \left\{ V^O(q), V^C(q, q_B) \right\}
\end{equation*}
\subsubsection*{Analytical Findings}

Since the open-sourcing decision involves a discrete choice, an analytical solution does not exist\footnote{Dynamic programming models are typically analyzed numerically. Though analytical solutions can be obtained for some special cases using a guess and verify approach, the presence of a discrete choice in the model precludes an analytical solution due to the non-differentiability of the value function.}. However, it is still possible to characterize some key properties of the solution under mild assumptions.

The following proposition establishes that, keeping other variables including $q_B$ constant, if there exists a $q^*$ such that $V^C(q, q_B) = V^O(q)$, then $q^*$ acts as a critical threshold. That is, firm $\mathbf{A}$ will open source its model only if $q < q^*$ and will keep the model closed if $q > q^*$.

\begin{proposition} \label{prop:threshold}
Let \( q^* > q_{\mathbf{B}} \) be such that \( V^{\mathbf{C}}(q^*, q_{\mathbf{B}}) = V^{\mathbf{O}}(q^*) \). Then, the following conditions hold:
\begin{itemize}
    \item If \( q > q^* \), then \( V^{\mathbf{C}}(q, q_{\mathbf{B}}) > V^{\mathbf{O}}(q) \).
    \item If \( q < q^* \), then \( V^{\mathbf{C}}(q, q_{\mathbf{B}}) < V^{\mathbf{O}}(q) \).
\end{itemize}
\end{proposition}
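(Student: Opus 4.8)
The plan is to reduce the entire threshold property to a single monotonicity statement: that the difference $\Delta(q) := V^{C}(q,q_B) - V^{O}(q)$ is strictly increasing in $q$ on the relevant region $q \geq q_B$, with $q_B$ held fixed. Once strict monotonicity is in hand, the hypothesis $\Delta(q^*)=0$ immediately forces $\Delta(q)>0$ for every $q>q^*$ and $\Delta(q)<0$ for every $q<q^*$, which are exactly the two claimed inequalities. So the substance of the proposition is this single-crossing-from-below property, and I would devote the proof entirely to it.

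To set it up I would first record the standard regularity of the two value functions. Since $\Pi^F$ and the optimized API profit are bounded and continuous, the quality transitions are monotone, and $\beta<1$, both $V^{O}$ and $V^{C}$ arise as unique fixed points of contraction mappings on the space of bounded, continuous, nondecreasing functions of $q$; hence both are well defined, continuous, and nondecreasing in $q$. The economically meaningful piece is the decomposition of $\Delta$: relative to $V^{O}$, the closed value $V^{C}$ earns the extra flow term $\Pi^{A}(q,q_B,P)$, and after optimizing over the license price this profit is zero at $q=q_B$ (no quality lead, hence no demand at a positive price) and strictly increasing in $q$ for $q>q_B$, because a wider lead $q-q_B$ raises the monopolist's attainable API revenue through the demand relation in Equation~\ref{eq:api_demand}. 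Offsetting this, the open branch accumulates quality faster, since the transition $q'=q+\psi K_A+\phi K_{-A}$ in Equation~\ref{eq:V_open_simp} carries the external-compute term $\phi K_{-A}$ that is absent from the closed transition $q'=q+\psi K_A$ in Equation~\ref{eq:V_closed}.

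The crux is therefore to show that the marginal value of quality is larger under the closed regime, i.e.\ that $\Delta$ stays increasing despite the open branch's faster growth. I would argue this by induction on the value-iteration sequence $\{V^{O}_n,V^{C}_n\}$: assuming each iterate is nondecreasing and $V^{C}_n(\cdot,q_B)-V^{O}_n(\cdot)$ is nondecreasing, the inductive step combines (i) the increasing differences of the flow payoff, driven by $\partial \Pi^{A}/\partial q>0$, with (ii) the monotonicity of the continuation, where the closed firm's option value $\max\{V^{O}(q'),V^{C}(q',q_B')\}$ dominates $V^{O}(q')$ pointwise and preserves the ordering. An equivalent route is an envelope argument evaluated at any zero of $\Delta$, showing $\partial V^{C}/\partial q > \partial V^{O}/\partial q$ there, which yields both uniqueness of the crossing and its orientation. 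The main obstacle is precisely the tension between channels (i) and (ii): the open branch's compounding growth advantage through $\phi K_{-A}$ could in principle make $V^{O}$ rise in $q$ faster than $V^{C}$ and destroy the single crossing. Controlling this is where the ``mild assumptions'' must enter — I expect them to bound the growth channel relative to the discounting and to the curvature of $\Pi^{F}$ (through $\alpha,\psi,\phi,\beta$), so that the monopoly-revenue channel, whose strength scales with the quality lead $q-q_B$, dominates uniformly for all $q\ge q_B$. Formalizing that domination so it propagates along the value-iteration recursion is the delicate part of the proof.
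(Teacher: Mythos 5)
Your reduction of the proposition to a single-crossing property is the right framing, and one of the two routes you mention only in passing --- the envelope argument evaluated at any zero of $\Delta$ --- is in fact the route the paper takes. But your proposal does not carry out either route, and the route you make primary (global strict monotonicity of $\Delta(q)=V^{C}(q,q_B)-V^{O}(q)$ on all of $q\ge q_B$, proved by induction along value iteration) is both stronger than what the proposition needs and, as you yourself concede, blocked at exactly the decisive step: you cannot show that the API-revenue channel dominates the open regime's growth advantage $\phi K_{-A}$ without invoking unspecified parameter restrictions. That is a genuine gap, not a technicality. Nothing in the model bounds $\phi K_{-A}$ relative to $\partial \Pi^{A}/\partial q$ uniformly in $q$, so the inductive step as you describe it cannot be closed, and global monotonicity of $\Delta$ is not established (nor needed) anywhere in the paper.

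The idea you are missing is the contradiction device that makes the local argument self-contained and dispenses with any channel-domination condition. At a crossing $q^*$, suppose toward contradiction that $V^{O}_q(q^*) > V^{C}_q(q^*)$. Under this hypothesis the closed firm's continuation $\max\{V^{O}(q'),\,V^{C}(q',q_B')\}$ is (to first order around $q^*$) the \emph{open} branch --- the firm would open source next period anyway --- so the closed problem's first-order condition in internal compute coincides with the open problem's, forcing $K^{C}=K^{O}$. The envelope conditions then give $V^{O}_q(q^*)=\Pi^{F}_q(q^*,K^{O})$ and $V^{C}_q(q^*)=\Pi^{F}_q(q^*,K^{C})+\Pi^{A}_q=V^{O}_q(q^*)+\Pi^{A}_q$, and since $\Pi^{A}_q>0$ this contradicts the hypothesis. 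Hence $V^{C}_q>V^{O}_q$ at every crossing, which yields uniqueness of $q^*$ and both claimed inequalities at once. Note how this sidesteps precisely the obstacle you flagged: the open branch's compounding growth through $\phi K_{-A}$ never enters the slope comparison, because the contradiction hypothesis itself pins the closed firm's continuation to the open value at $q^*$. The paper's ``mild assumptions'' are only the smoothness/first-order-approximation ones (flow $K$ small relative to the stock $q$, so the linearization and envelope steps are valid), not bounds on $\alpha,\psi,\phi,\beta$ of the kind your plan would require.
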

\begin{proof}
The complete proof can be found in Appendix \ref{sec:app_model}. Briefly, the proof involves a first-order approximation of $V^C$  and $V^O$ around $q^*$, and exploiting that $V^C$ and $V^O$ are strictly increasing  with respect to $q$ and $V^C$ is strictly decreasing with respect to $q_B$.
\end{proof}

The second proposition implies that the firm sets the price of its model's API below its one-period revenue maximizing value to limit the growth of model $B$.
\begin{proposition} \label{prop:price}
Unless the optimal solution implies \( V^{\mathbf{C}}(q', q'_{\mathbf{B}}) < V^{\mathbf{O}}(q') \), the firm sets \( P \) below its one-period revenue-maximizing value.
\end{proposition}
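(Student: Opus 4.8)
The plan is to reduce the joint maximization in Equation \ref{eq:V_closed} to a one-dimensional problem in the API price $P$ and then compare the dynamic first-order condition against the static one. First I would observe that, in the constraints of Equation \ref{eq:V_closed}, the next-period own quality $q' = q + \psi K_A$ depends only on the internal compute $K_A$ and not on $P$, whereas the rival's next-period quality $q'_B = q_B + \phi K_B$ depends on $P$ solely through $K_B$, the aggregate compute of the external producers who adopt the free model $B$. Holding the optimal $K_A$ fixed, the firm's choice of $P$ therefore maximizes $g(P) := \Pi^A(q, q_B, P) + \beta \max\{V^O(q'), V^C(q', q'_B(P))\}$. I would define the myopic benchmark $P^m := \arg\max_P \Pi^A(q, q_B, P)$ and note, from the demand relation in Equation \ref{eq:api_demand}, that $Q_A$ is affine in $\ln P$, so that $\Pi^A = P\,Q_A$ is strictly concave in $P$ with a unique interior maximizer $P^m$; consequently $\partial_P \Pi^A > 0$ exactly when $P < P^m$.

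The second step is to sign the effect of $P$ on rival growth, i.e.\ to show $\partial_P q'_B > 0$. Here I would characterize the external producers' discrete choice from Equation \ref{eq:profit}: letting $\Phi(x,q) := \max_k\{e^{-\gamma x}(qk)^\alpha - k\}$ denote a producer's indirect profit, a producer at $x \in (m,1]$ adopts $A$ iff $\Phi(x,q) - \Phi(x,q_B) \ge P$. Since $\Phi(x,q)-\Phi(x,q_B)$ is proportional to $e^{-\gamma x/(1-\alpha)}\bigl(q^{\alpha/(1-\alpha)} - q_B^{\alpha/(1-\alpha)}\bigr)$ and hence strictly decreasing in $x$, adoption of $A$ is governed by a cutoff $\hat{x}(P)$, with the set of $B$-adopters equal to $(\hat{x}(P), 1]$. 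Raising $P$ lowers $\hat{x}$, enlarging the $B$-adopting interval, so $K_B = \int_{\hat{x}(P)}^{1} k_B^*(x)\,dx$ strictly increases and thus $\partial_P q'_B = \phi\,\partial_P K_B > 0$.

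With these two ingredients the conclusion follows from the first-order condition for $P$. In the case that is \emph{not} excluded by the hypothesis, the optimum satisfies $V^C(q', q'_B) \ge V^O(q')$, so the continuation term is $V^C(q', q'_B(P))$ and $g'(P) = \partial_P \Pi^A + \beta\,\partial_{q_B} V^C \cdot \partial_P q'_B$. Invoking the monotonicity established for Proposition \ref{prop:threshold}, namely $\partial_{q_B} V^C < 0$, together with $\partial_P q'_B > 0$, the dynamic term is strictly negative. Hence at any interior optimum $\partial_P \Pi^A = -\beta\,\partial_{q_B}V^C \cdot \partial_P q'_B > 0$, and strict concavity of $\Pi^A$ forces $P^* < P^m$: the firm prices below its one-period revenue-maximizing level precisely to slow the accumulation of $q_B$. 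I would then dispatch the excluded case to show the caveat is tight: if instead $V^C(q', q'_B) < V^O(q')$ at the optimum, the continuation equals $V^O(q')$, which is independent of $P$ (since $V^O$ does not depend on $q_B$); the dynamic term vanishes and the firm reverts to $P^* = P^m$.

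I expect the main obstacle to be the differentiability required to write $g'(P)$ — specifically, justifying that $V^C$ is differentiable in $q_B$ with a strictly negative derivative, and handling the kink in the continuation $\max\{V^O, V^C\}$ at the quality level where the firm becomes indifferent to open-sourcing next period. To avoid leaning on smoothness, I would supplement the argument with a derivative-free perturbation for the case $P^* > P^m$: lowering $P$ weakly raises $\Pi^A$ (by concavity) and strictly raises $V^C$ (by strict monotonicity of $V^C$ in $q_B$ and of $q'_B$ in $P$), contradicting optimality; only the knife-edge $P^* = P^m$ genuinely requires the first-order/envelope argument, where the immediate loss is second order while the dynamic gain is first order. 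A secondary technical point is confirming that the optimum is interior, so the FOC binds, and that the perturbation stays within the region where the continuation equals $V^C$, both of which follow from the strict inequalities above.
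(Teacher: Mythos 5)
Your proof is correct and follows essentially the same route as the paper's: a first-order condition in $P$ whose dynamic term $\beta\,\partial_{q_B}V^{C}\cdot\partial_P q'_B$ is signed strictly negative (the paper obtains $\partial_{q_B}V^{C} = \Pi^A_{q_B} < 0$ via the envelope theorem, where you invoke the monotonicity of $V^{C}$ in $q_B$ directly), forcing $\Pi^A_P > 0$ at the optimum and hence $P^* < P^m$ by concavity of $\Pi^A$, with the excluded case $V^{C}(q', q'_B) < V^{O}(q')$ handled identically in both arguments since $V^{O}$ does not depend on $P$. Your supplementary derivative-free perturbation ruling out $P^* > P^m$ is a modest strengthening absent from the paper, since it sidesteps the differentiability of $V^{C}$ in $q_B$ and the kink in $\max\{V^{O}, V^{C}\}$ that the paper's FOC/envelope computation implicitly assumes away.
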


\begin{proof}
    The proof involves using the first order condition and the envelop theorem with respect to $P$ and $q_B$. See the complete proof in Appendix \ref{sec:app_model}. \
\end{proof}
\subsubsection*{Numerical Analysis}
Dynamic programming models that involve a discrete choice must be analyzed numerically. For this purpose, I use Value Function Iteration (VFI) method, one of the most common methods to solve dynamic programming models in economics. VFI hinges on the assumption that the transformation of the value function is a contraction. This technique begins with an arbitrary initial guess for the value function, which is then iteratively updated using the Bellman equation and guaranteed to converge under a couple of weak and commonly made assumptions. For additional details on the numerical analysis, please refer to Appendix \ref{sec:app_model}.

Table \ref{tab:model_params} presents the baseline choice of parameters used in the numerical analysis. Note that the choices of these parameters are not to be taken literally. The goal here is to see how the tendency to open source generally varies with changes in a few key factors of interest within a simple framework, but the numerical values are not important.
\begin{table}[h] 
\footnotesize
    \centering
    \caption{Baseline Choices of Parameters}
\begin{threeparttable}   
\begin{tabular}{lcc}
\toprule
Description & Notation & Baseline Value  \\
\midrule
Time Discount Factor & $\beta$ & 0.9  \\
AI Compatibility Parameter &  $\gamma$ & 1.0 \\
Shape Production Function & $\alpha$ & 0.45 \\
Size Firm $\mathbf{A}$ & $m$ & 0.2 \\
Efficiency of Internal Development & $\psi$ & 0.5 \\
Efficiency of open source Ecosystem & $\phi$ & 0.5 \\
LLM Development Improvement Factor & $\lambda$ & 5.0 \\
LLM Development Cost Factor & $c_D$ & 0.4 \\
\bottomrule
\end{tabular}
\end{threeparttable}
\label{tab:model_params}
\end{table}

\subsection{Results}
In the following section, I present the results from the numerical analysis of the model.

\subsubsection*{Open Sourcing Decision}
Figure \ref{fig:model_decision} illustrates how firm $\mathbf{A}$'s valuation of the open and closed model varies with model quality, for a given quality of the alternative model $B$. Where model $A$ has a modest lead over model $B$, the value of closed model is smaller than the value of open source model. In this region, Firm $\mathbf{A}$ will open source its model. However, there is a critical threshold such that the value of open and closed model intersect. This critical threshold, which is marked by the vertical dashed line in the figure, specifies the open source window, implying that Firm $\mathbf{A}$ will open source $A$ only if $q_A$ is in this window, and for the qualities above that critical threshold $\mathbf{A}$ will keep its model closed.  Furthermore, Figure \ref{fig:os_window_q_B}  demonstrates how the size of the open source window changes with respect to $q_B$. The results show that while the absolute size of the window increases for larger values of $q_B$, its relative size remains roughly constant.
\begin{figure}[htp]
\centering
\footnotesize
\caption{Open Sourcing Decision}
\label{fig:model_decision}
    \centering
    \includegraphics[width=0.75\linewidth]{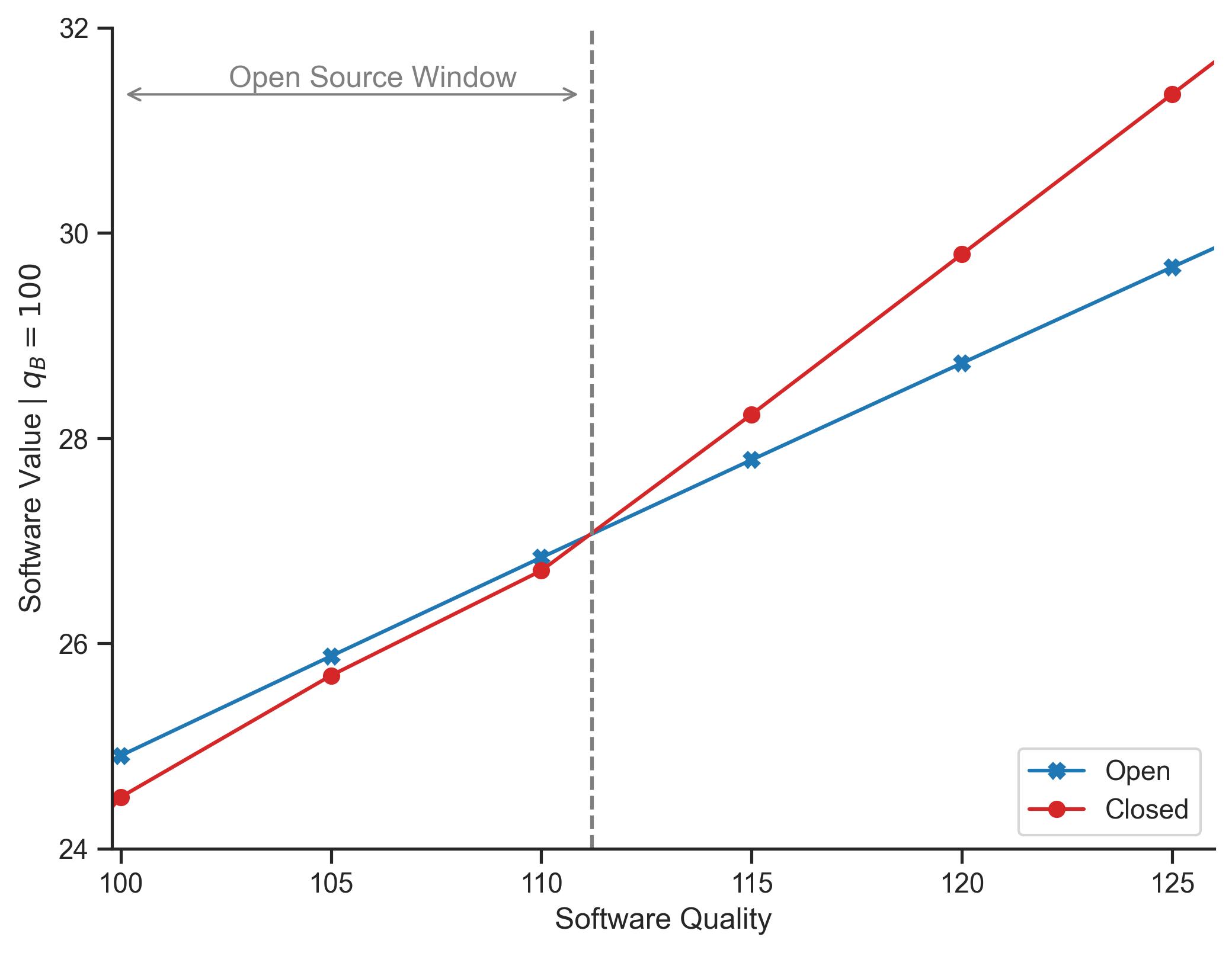}
    
    \parbox{\textwidth}{\footnotesize{\textit{Notes:} The figure depicts the valuation of closed and open model for firm $\mathbf{A}$ as a function of its quality, while maintaining the quality of model $B$ at a constant value of 100. The modeling parameters used in the figure are detailed in Table \ref{tab:model_params}.}}
\end{figure}
\subsubsection*{Efficiency of the open source Community}

The efficiency of the open source community in contributing to model quality, denoted by parameter $\phi$, has a straightforward impact on Firm $\mathbf{A}$'s open-sourcing \emph{decision}. Essentially, a larger $\phi$ implies a larger value of the open-sourcing option by accelerating the growth opportunities of open model, and a smaller value of closed-source model by increasing the growth potential of model $B$ and decreasing API profits. Therefore, the open source window size must be increasing in $\phi$. Figure \ref{fig:os_window_phi} illustrates such a relationship. The figure shows that if $\phi$ is sufficiently small, then the size of the open source window is 0, that is, $\mathbf{A}$ will not open its LLM even if it is only marginally better than $B$.

The more interesting question is how the efficiency of the open source community affects the overall model \emph{value}, after accounting for its impact on the open source decision. Figure \ref{fig:model_phi} illustrates the impact of a reduction in $\phi$, from its baseline value of 0.5 to 0.1, on the value of closed and open model for Firm $\mathbf{A}$. As anticipated, the value of open model diminishes with a decrease in $\phi$, since in the open source scenario, the efficiency of the open source ecosystem accelerates the model's development, thereby enhancing its value. Conversely, the impact of a reduction in $\phi$ on the value of closed model is more complex. When the lead of model $A$ over the alternative model $B$ is not too large, a decrease in $\phi$ also lowers the value of closed model. This decrease stems from the fact that the optimal decision for such $q$ values is to open source the model, and the value of open model influences the immediate value of closed model through the discounted option value of choosing between open and closed model, represented by $\beta \max\{ V^{O}, V^C\}$ in Equation \ref{eq:V_closed}. However, if the lead of model $A$ over $B$ is large enough that Firm $\mathbf{A}$'s optimal decision is to keep $A$ closed, then a reduction in $\phi$ essentially increases the value of closed model by limiting the competition between model $A$ and $B$, thus enabling Firm $\mathbf{A}$ to extract larger profits by selling API access to its model.

The impact of $\phi$ on $\mathbf{A}$'s model valuation presents another intriguing interpretation. The vertical dashed line in the figure marks the critical $q$ at which the value of the LLM, i.e., $\max\{V^O, V^C \}$, with $\phi=0.1$ exceeds the model value when $\phi=0.5$. Consequently, \textbf{IF} the firm could influence the efficiency of the open source ecosystem exogenously, such as by investing in the open source ecosystem or lobbying for regulations, it would opt for the former strategy when its lead over the open sourced alternative is small/moderate, and the latter when its lead is substantial.
\begin{figure}[ht]
    \centering
    \footnotesize
    \caption{Efficiency in the open source Community and LLM's Value }
    \includegraphics[width=0.75\textwidth]{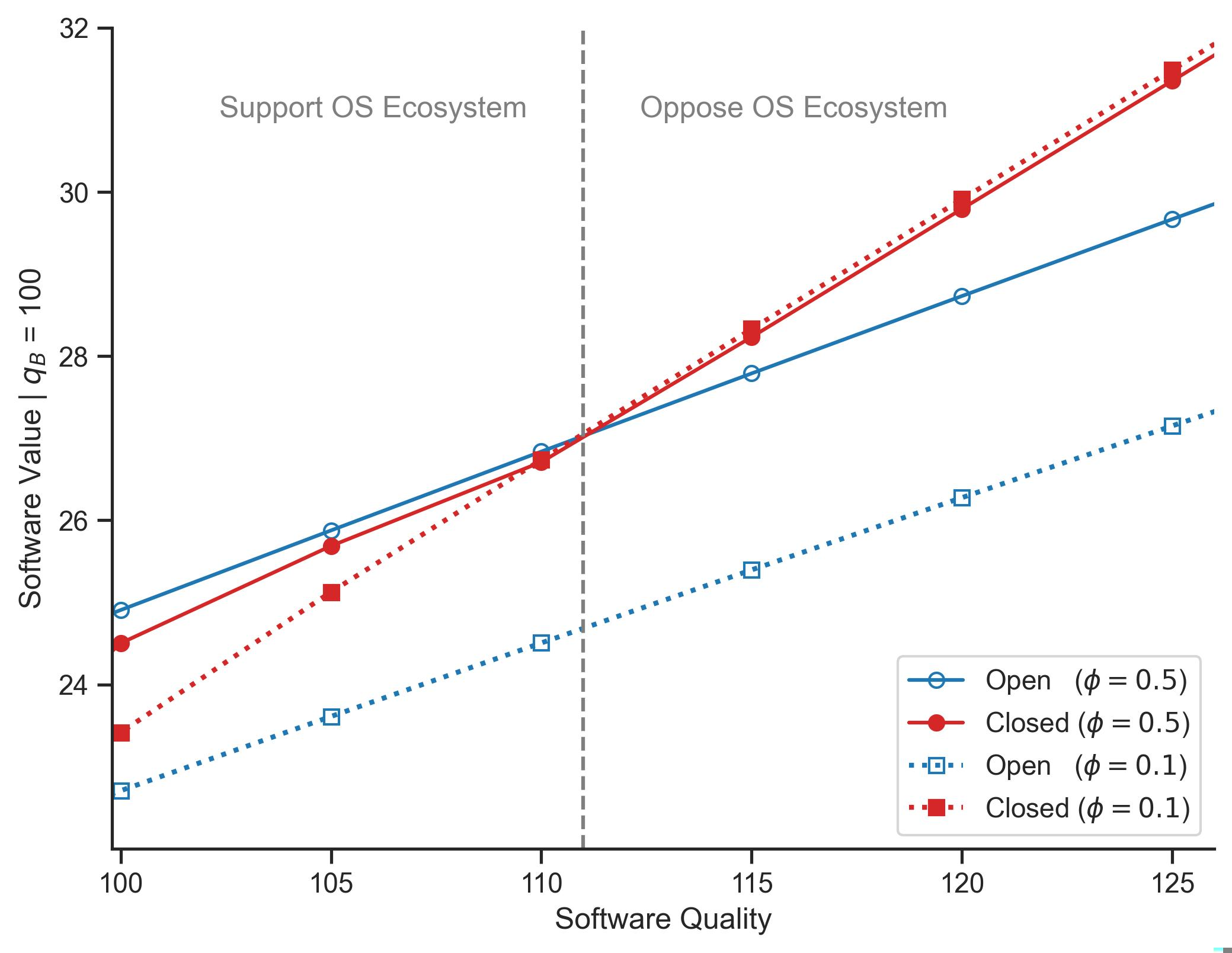} 
    
     \parbox{\textwidth}{\footnotesize{\textit{Notes:} The figure shows the valuation of closed and open model for firm $\mathbf{A}$ as a function of its model quality, under two different values of ($\phi$) denoting the efficiency of the open source community. The quality of model $B$ is held constant at 100. For further details on the modeling parameters, see Table \ref{tab:model_params}.}}
    \label{fig:model_phi}
\end{figure}
\subsubsection*{Open Source Decision and Firm Size}
Subsequently, I explore the interplay between the size of firm $\mathbf{A}$ and its decision to open source. Here, ``firm size" refers to the mass of producers in the application sector owned by the firm and possessing AI-compatible technology. Figure \ref{fig:model_size} illustrates the size of the open source window for various firm sizes. As indicated in the figure, the influence of firm size on open-sourcing decisions is not linear. For a relatively small firm, the profits from internal production are minimal, diminishing the advantages of open-sourcing, which primarily lies in the accelerated model quality growth. In such cases, revenue from API licensing becomes significantly more valuable, leading to weak incentives for open-sourcing. Conversely, for a larger firm, internal production profits become a major portion of overall profits, favoring open-sourcing due to the accelerated growth of open LLM and, consequently, the increased profits generated by Firm $\mathbf{A}$'s producers in the AS. However, when the firm becomes too large, the benefits of external contributions to open source model diminish in comparison to internal contributions, diminishing the incentives for open-sourcing once again. Overall, this results in an inverted-U shaped relationship between the firm's size and its inclination to open source its advanced LLM.

\begin{figure}[htb]
    \centering
    \footnotesize
    \caption{Firm Size and Open Soruce Decision}
    \includegraphics[width=0.75\textwidth]{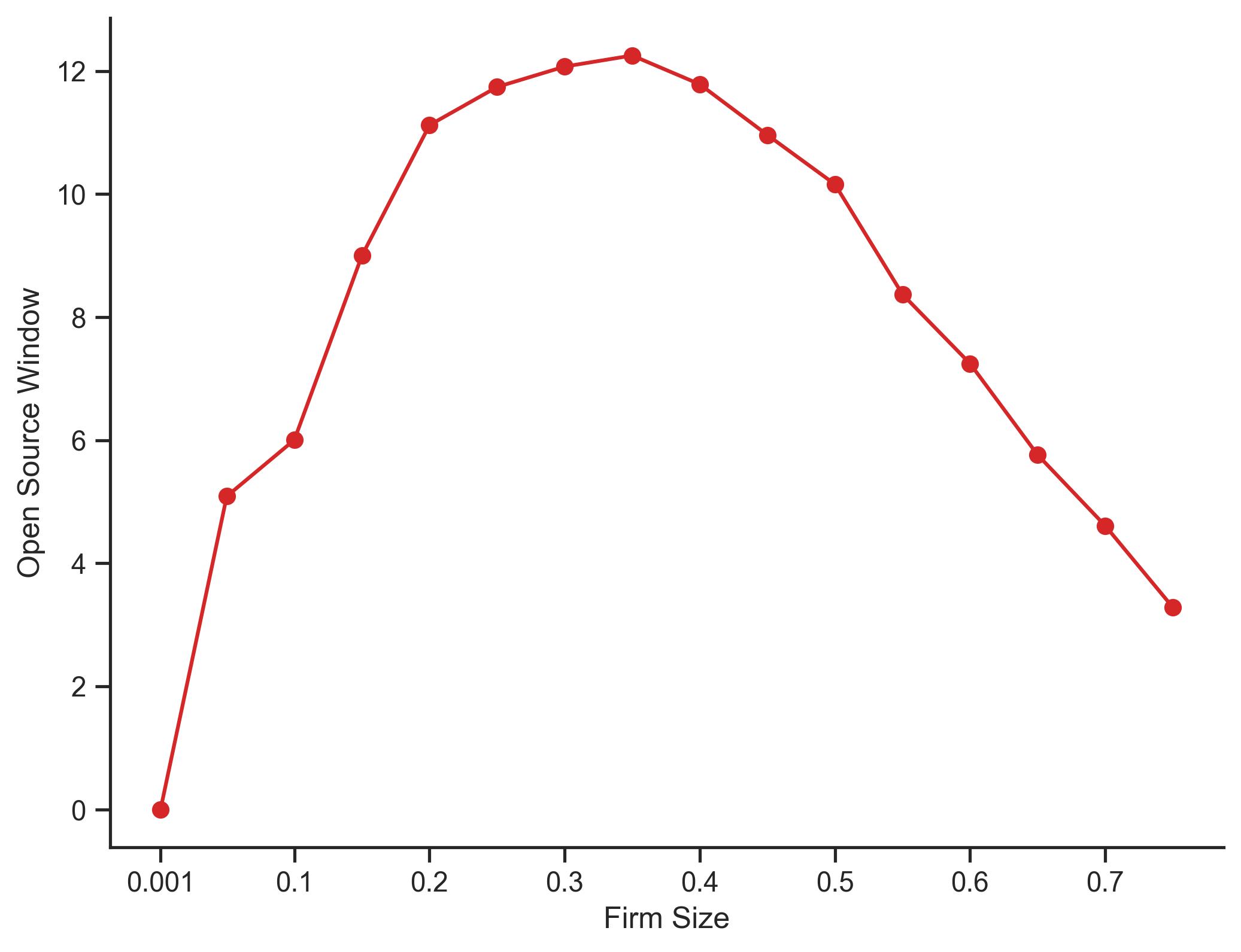} 
    
     \hspace{0.4cm}\parbox{\textwidth}{\footnotesize{\textit{Notes:} The figure illustrates the open source window size, indicating the maximum quality lead of model $A$ over model $B$ at which Firm $\mathbf{A}$ opts to open source model $A$, across different values of firm size $m$. The quality of model $B$ is fixed at 100. See Table \ref{tab:model_params} for additional details on the modeling parameters.}}
    \label{fig:model_size}
\end{figure}
\subsubsection*{Development Decision}
In the previous analysis, I assumed that the firm is initially endowed with an advanced LLM. This section delves into the decision to develop new model after observing the quality of existing open source LLM, $q_B$. Two assumptions underpin the functional form of the new model development process. First, the cost of creating a new LLM is proportional to the current open source LLM, represented as $c_D q_B$. Secondly, the expected quality of this new model is denoted by $\lambda^u q_B$, where $u$ is drawn from a uniform distribution $U[0, 1]$, and $\lambda>1$ is the development factor. Thus, the expected value of developing a new LLM is given by:
\begin{equation}
E(V = \max\{V^O, V^C\} | q_B) = E(V_A (\lambda^u q_{B}, q_B)) - c_D q_B
\label{eq:develop}
\end{equation}
where $E(V = \max\{V^O, V^C\} | q_B)$ indicates that if the firm develops a new model, its value is determined based on the observed quality and its open source rival's quality, $q_B$, leading to the selection of either open-sourcing or keeping it closed to achieve the maximum value of the open and closed options.

Figure \ref{fig:model_dev} demonstrates how the expected value of developing a new LLM varies with the quality of existing open source LLM and firm size. The figure reveals that the expected value of developing a new LLM increases with firm size and \emph{generally} diminishes with the quality of the existing open source model. This pattern suggests a notable dynamic in model development: during the early stages of technology, when existing model quality is low, both small and large firms find it beneficial to invest in developing an advanced alternative. However, as the quality of existing model improves and the investment required for new development escalates, only larger firms continue to find development profitable.
\begin{figure}[htb!]
    \centering
    \footnotesize
    \caption{LLM Development Decision}
    \includegraphics[width=.75\textwidth]{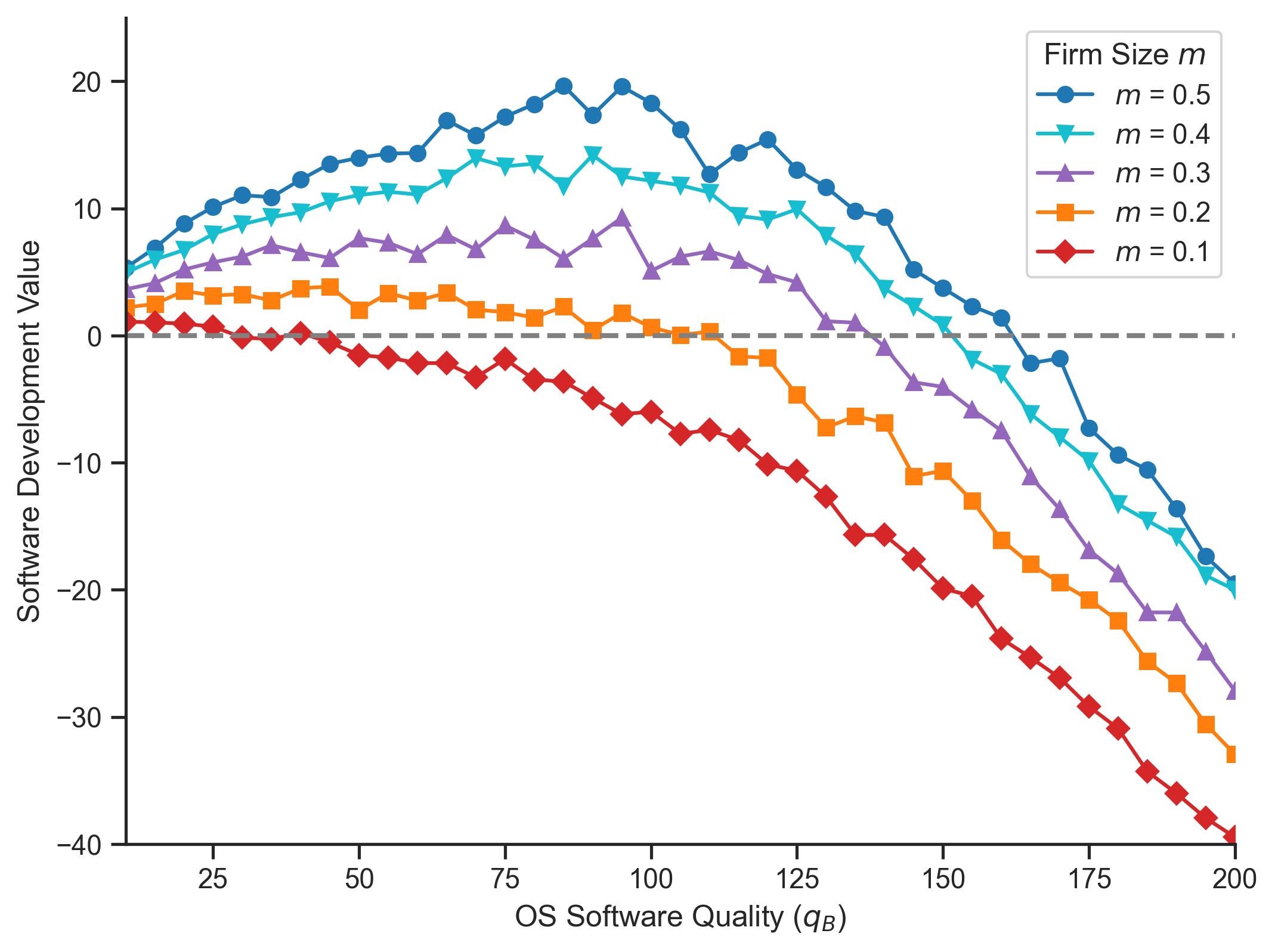} 
    
     \hspace{0.4cm}\parbox{\textwidth}{\footnotesize{\textit{Notes:} The figure depicts the expected value of developing new AI model relative to the quality of existing open source model, across various firm sizes. Additional modeling specifics are provided in Table \ref{tab:model_params}.}}
    \label{fig:model_dev}

\end{figure}

\section{Conclusion}
\label{sec:conclude}

The primary objective of this study was to explore the rationale behind for-profit companies' decisions to open source their AI software from a profit-maximizing perspective, focusing on Large Language Models (LLMs) as a particular example.

Analyzing the technology landscape using patent data reveals that LLMs are compatible with the R\&D portfolios of a wide array of firms, across varying sizes and with differentiated research technologies. Furthermore, exploiting the open source release of LLaMA, the LLM developed by Meta, the impact of open source contributions on stimulating LLM-related research activities was studied. The results suggest that contributions by LLM researchers on GitHub, considered to be a proxy for research activity, significantly increased following the release of LLaMA.

Additionally, a profit-maximizing firm's decision regarding the development and open-sourcing of a LLM as a multi-purpose technology was modeled in a dynamic discrete choice framework. The theoretical analysis yielded several compelling insights. The predictions suggested that initially, both small and large firms might find it advantageous to invest in developing new LLMs. As the development phase progresses towards the middle and late stages, only larger firms might continue to commit to new model development. Decisions about open-sourcing hinge on the model's quality advantage over open source rivals and the firm's size. A substantial gap between a firm's model and the previous open source state-of-the-art acts as a deterrent to open-sourcing, as does being excessively small or large.

Lastly, it's important to acknowledge that this study merely scratches the surface of a complex and evolving topic with significant implications for both industry and policy-making. Future research should delve deeper into areas that remain under explored in this study. Among these, the influence of competition between AI developers on their open-sourcing decisions present a critical area for exploration. Additionally, the potential impacts of regulating open source model warrant comprehensive investigation given their far-reaching consequences. Equally important is examining how open-sourcing advancements influence the behavior of downstream firms. These areas represent fertile ground for future studies, promising to enrich our understanding of the ever-changing landscape of AI development and its broader economic and societal impacts.

\clearpage
\bibliographystyle{chicago}
\bibliography{references}

\newpage
\appendix 
\setcounter{figure}{0} 
\setcounter{equation}{0}
\renewcommand\thefigure{\thesection.\arabic{figure}}  
\section*{\huge Appendix}

\section{Data}
\setcounter{figure}{0} 
\label{sec:app_data}
\setcounter{table}{0} 
\renewcommand{\thetable}{\Alph{section}.\arabic{table}} 
\subsection{Data Extraction and Pre-Processing}

\subsubsection*{Papers-with-Code; arXiv}
Papers-with-Code is a community-driven initiative providing practitioners with free access to AI/ML research resources. This platform maintains up-to-date information on open-access AI/ML publications and the code repositories associated with those papers. I retrieved the data in January 2024. The raw dataset contains c.a. 219,000 publications (there are instances where a paper has entered the daset multiple times). The raw dataset does not include the publication dates and the abstracts of papers. Therefore, I use arXiv API to retrieve data of more than 142 thousand publications with unique  ``arxiv id''. For the main analysis, I use only publications linked with an official code repository on GitHub, and use the publications with unofficial repositories for training the LLM classifier in later stages.  I extract the publication date from the `published' field in the arXiv dataset, and keep only papers published from 2019 onward. The following table shows the number of publications in the dataset per year.
\begin{table}[ht]
\centering
\footnotesize
\caption{Number of publications in the arXiv dataset}
\begin{tabular}{cc}
\toprule
\textbf{Year} & \textbf{N. Papers} \\ \midrule
2019 & 11,287 \\
2020 & 17,734 \\
2021 & 23,508 \\
2022 & 26,613 \\
2023 & 28,755 \\
2024 & 290 \\ \bottomrule
\end{tabular}
\label{tab:publications}
\end{table}

\subsubsection*{GitHub}
In the first step, I retrieve the data for the remaining repositories in the Papers-with-Code dataset from GitHub using its API, of nearly 108 thousand observations, data for c.a. 106 thousand repositories were successfully retrieved. For each of these repositories, I then collect data for owners and contributors to these repositories. Data for more than 109K unique contributors and 8.2K unique organizations was successfully retrieved. 

For creating the control group, I also retrieved profiles  of data contributors to 20 popular repositories in various topics that are not directly AI-related. Each repository is the most starred non-educational repository associated with a particular ``Topic'' on GitHub. The following table presents the topic and the link of each repository in the list.
\begin{table}[ht]
\centering
\caption{Non AI-related Repositories}
\footnotesize
\begin{tabular}{ll}
\toprule
\textbf{Topic}         & \textbf{Repo}     \\ \midrule
Blockchain                 & ethereum/go-ethereum  \\
Quantum                    & Qiskit/qiskit  \\
Database                   & netdata/netdata   \\
Cloud                      & localstack/localstack \\
PHP                        & laravel/laravel    \\
JavaScript                 & vuejs/vue    \\
Android                    & flutter/flutter  \\
3D                         & mrdoob/three.js   \\
IoT                        & home-assistant/core  \\
Docker                     & moby/moby   \\
Golang                     & golang/go  \\
Microservices              & nestjs/nest \\
Django                     & django/django  \\
Rust                       & denoland/deno  \\
Game Development           & godotengine/godot   \\
Dashboard                  & grafana/grafana \\
CLI                        & ohmyzsh/ohmyzsh  \\
Vim                        & neovim/neovim \\
REST                       & tiangolo/fastapi \\
Web Applications           & angular/angular   \\ 
\bottomrule
\end{tabular}
\label{tab:control_repos}
\end{table}

\subsubsection*{Patent-Application Data}
I used patent application data files with US Patent and Trademark Office. The data were accessed through PatentsView.org in February 2024, containing related data until the end of 2023. The scope was limited to utility patent applications filed by organizations with at least two applications during 2019-2023. The dataset comprises nearly 1.4 million applications. The organizations that their name contained terms indicative of universities or research institutes were removed from the sample. The name of the organizations were cleaned and unique applicant ids were created based of the cleaned names, resulting in more than 60K applicants. The \emph{current} version of Cooperative Patent Classification System was used. 

\subsection{Feature Extraction with LLMs}
\label{subsec:parse_llm}
The scale and unstructured format of data prohibited manual or pattern oriented feature extraction. Therefore, I used LLMs to extract features for three separate tasks. First, given users profile information (bio, location, company) the LLM was asked to extract country, current organization, and sector (academia or industry) if such information is provided in the given text. Second, from the information provided for an organization (name, location, and bio) determine if the organization is a commercial entity. Finally, for creating training data, the LLM was given the title and and the abstract of the paper, and was asked to determine if the paper is related with LLMs at any capacity. For the first two tasks, GPT3.5 API was used to ensure consistency. However, for the final task, to reduce dependency on a single model, data was split between GPT3.5 and Mixtral 7x8B. Each each observation were processed  individually, and models' temperature were set to zero to make sure that the results are reproducible. The prompts are provided below.

\subsubsection*{Parsing Contributors Prompt}
\begin{lstlisting}[caption=, label=lst:prompt_contrs]
{
      "role": "system",
      "content": "You are a parser that extracts specific information from a given biographical text. You should identify whether the person works in Industry or Academia, the name of their organization, and the country they are located in, using a 2-letter country code. If any information is missing or unclear, respond with 'XX'. Output the information in JSON format with the keys 'OrgType', 'OrgName', and 'Country'."
    },
    {
      "role": "user",
      "content": f"{bio}"
    }
\end{lstlisting}

\subsubsection*{Parsing Organization Prompt}
\begin{lstlisting}[caption=, label=lst:prompt_contrs]
    {
      "role": "system",
      "content": 
        """
        Extract the name, type, and location of an organization from the provided information. 
        Remove unnecessary parts from names, such as 'the', 'inc', 'corp', 'labs', 'research', etc. 
        Classify the organization as a commercial entity (e.g., firm, startup, company, corporation) or non-commercial. 
        If information cannot be inferred, respond with 'XX'. 
        Format the output in JSON with the keys 'OrgName', 'IsCommercial' (True for commercial entities, False otherwise), and 'Country' (2-letter country codes.). 

        Example: 
        'Name: Meta Research; Location: Menlo Park, California; Bio: '
        {'OrgName': 'Meta', 'IsCommercial': True, 'Country': US}
        Process the following information:
       """
    },
    {
      "role": "user",
      "content": f"{text}"
    }
\end{lstlisting}

\subsubsection*{Parsing Papers Prompt}
\begin{lstlisting}[caption=, label=lst:prompt_contrs]
    {
      "role": "system",
      "content": 
        """
        Given the title and abstract of a research paper, determine if the paper uses, analyzes, improves, or is in any way related to Large Language Models (LLMs). 
        Return the result in a JSON format with 'LLMRelated' key is set to True if the paper is related to LLMs, and False otherwise.
        Consider the following paper:
       """
    },
    {
      "role": "user",
      "content": f"{text}"
    }
\end{lstlisting}

\clearpage
\section{Additional Results}
\setcounter{figure}{0} 
\setcounter{table}{0} 
\renewcommand{\thetable}{\Alph{section}.\arabic{table}} 
\label{sec:app_llama}

\subsubsection*{Firms Compatible with LLMs}

Figure \ref{fig:llm_sim}, Panel (a), displays the cosine similarity of the 50 companies closest to the Transformer vector in the constructed latent technology space. The firm with the highest similarity to Transformer technology is Grammarly, an English writing assistance application. The list also contains many AI startups as well as established firms such as Xiaomi, Thomson Reuters, Accenture, and PwC. An interesting observation emerges: none of the commonly recognized as ``Big Tech'' companies are present in the list\footnote{Another interesting observation is that many companies in the list have not filed for either an NLP or a Natural Language Generation application. As indicated by the corresponding CPC codes for NLP, G06F40/4, and NL Generation, G06F40/56.}. Panel (b) plots the pairwise cosine similarities of these firms' vectors in the constructed latent technology space. Although these firms are fairly close in terms of their similarity with LLMs, it appears that their overall technology portfolios are quite differentiated.
 
\begin{figure}[ht]
    \centering
    \caption{Technology Similarity with LLMs}
    \label{fig:llm_sim}
    \begin{minipage}[b]{0.45\textwidth}
        \centering
        \includegraphics[width=\textwidth]{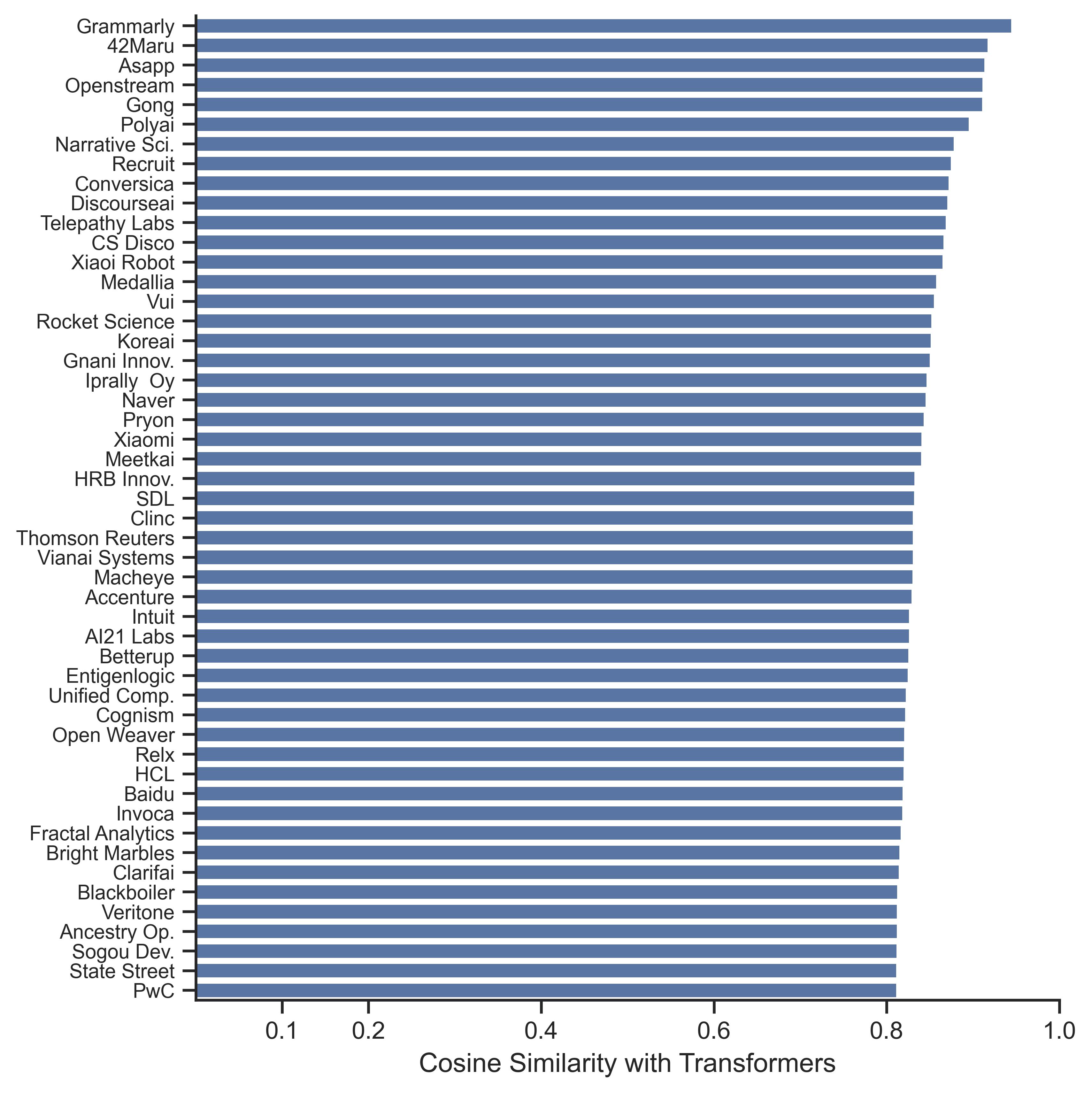}
        \caption*{(a) Similarity w. Transformers}
    \end{minipage}%
    \begin{minipage}[b]{0.45\textwidth}
        \centering
        \includegraphics[width=\textwidth]{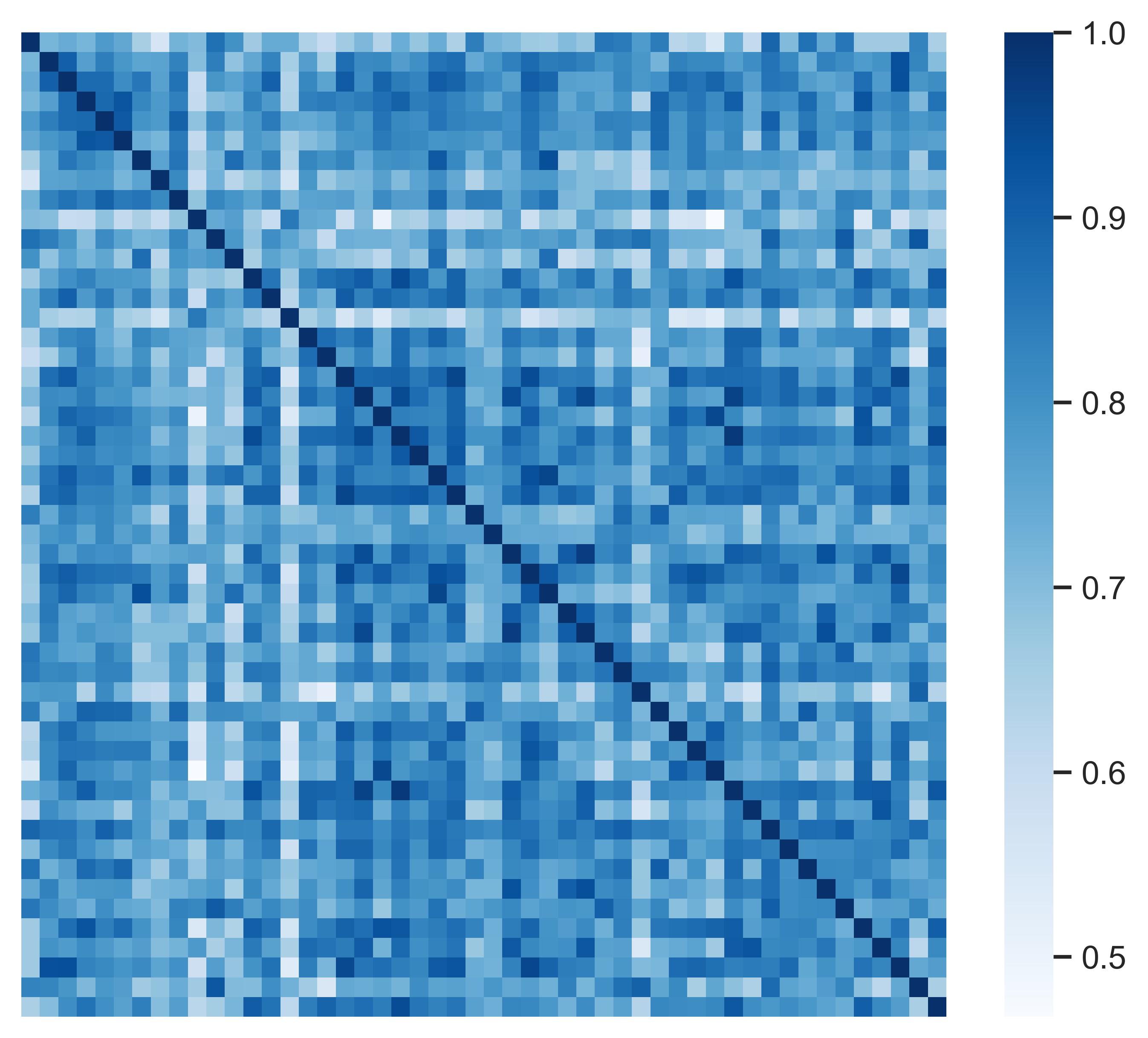}
        \caption*{(b) Cross-Technology Similarity}
    \end{minipage}
    \vspace{0.4cm}
    \parbox{\textwidth}{\footnotesize{\textit{Notes:} Panel (a) presents the 50 firms with the highest cosine similarity to Transformers in the latent technology space. Panel (b) plots the cross-cosine similarity among firms included in the left panel.}}
\end{figure}

\subsubsection*{Synthetic Difference-in-Difference Estimates}
\begin{table}[h!]
\centering
\footnotesize
\caption{SDiD Estimates of Impact of open source on GitHub Contributions}
\label{tab:reg_sdid}
\begin{threeparttable}
\begin{tabular}{lcccc}
\toprule
& All & Academy & Industry \\
\midrule
ATT & 1.109*** & 1.249*** & 0.769* \\
 & (0.159) & (0.215) & (0.417) \\
\addlinespace
N & 11,212 & 8,326 & 6,400 \\
\bottomrule
\end{tabular}
\begin{tablenotes}[para,flushleft]
\emph{Notes:} The table presents the Synthetic Difference-in-Differences estimates of the impact of LLaMA on total weekly contributions of LLM researchers on GitHub. The outcomes for Week 0 (the first seven days after LLaMA's announcement) are omitted. `Academy' indicates the group of LLM researchers whose GitHub profiles indicate that they are working in academia, and `Industry' indicates the estimates for LLM researchers whose GitHub profiles indicate they are employed in the industry. Bootstraped cluster-robust standard errors are displayed in parentheses (N=50). 
\end{tablenotes}
\end{threeparttable}
\end{table}

\subsubsection*{Simultaneous Shocks}
\begin{table}[h!]
\centering
\footnotesize
\caption{Short Term Impact Estimates with Daily Data }
\label{tab:reg_daily}
\begin{threeparttable}
    \begin{tabular}{lcccccc} 
    \toprule
     & (1) & (2) & (3) & (4) & (5) & (6) \\
     & All & All & Academy & Academy & Industry & Industry \\
     \midrule

     ATT & 0.340*** & 0.622*** & 0.309** & 0.593*** & 0.397** & 0.753*** \\
         & (0.0992) & (0.0847) & (0.120) & (0.109) & (0.165) & (0.174) \\
        \addlinespace
        Obs. & 420,250 & 420,250 & 312,543 & 312,543 & 244,032 & 244,032 \\
         $R^2$ & 0.005 & 0.002 & 0.005 & 0.002 & 0.006 & 0.002 \\
        N. Ind. & 10,250 & 10,250 & 7,623 & 7,623 & 5,952 & 5,952 \\
        
    Ind. FE & Y & Y & Y & Y & Y & Y \\
    Time FE & Y & N & Y & N & Y & N \\
    Trend   & N & Y & N & Y & N & Y \\
    Trend x Treat & N & Y & N & Y & N & Y \\ 
     \bottomrule
    \end{tabular}
\begin{tablenotes}[para,flushleft]
\emph{Notes:}  The table presents the Difference-in-Differences estimates of the impact of LLaMA on the daily  contributions of LLM researchers on GitHub, limited to 30 days before and 17 days after the introduction of LLaMA, before the release of GPT-4. The dependent variable is the relative deviation of daily contributions from their mean pre-event level. The outcomes for the first seven days after LLaMA's announcement are omitted. `Academy' indicates the group of LLM researchers whose GitHub profiles indicate that they are working in academia, and `Industry' indicates the estimates for LLM researchers whose GitHub profiles indicate they are employed in the industry. Cluster-robust standard errors are displayed in parentheses. 
\end{tablenotes}
\end{threeparttable}
\end{table}
\break
\subsubsection*{Other Treatment Variable Definitions}
\begin{table}[ht]
\centering
\footnotesize
\caption{Impact of open source on GitHub Contributions}
\label{tab:reg_treat}
\begin{threeparttable}
    \begin{tabular}{lcccccc}
    \toprule
    & (1) & (2) & (3) & (4) & (5) & (6) \\
    & All & All & Academy & Academy & Industry & Industry\\
    & LM & K-Means & LM & K-Means & LM & K-Means \\ 
    \midrule
    ATT & 1.402*** & 1.241*** & 1.389*** & 1.514*** & 1.231** & 0.880 \\
        & (0.221) & (0.246) & (0.235) & (0.265) & (0.505) & (0.558) \\
    \addlinespace
    Obs. & 199,800 & 199,800 & 154,920 & 154,920 & 120,640 & 120,640 \\
    $R^2$ & 0.006 & 0.006 & 0.007 & 0.007 & 0.004 & 0.004 \\
    N. Ind. & 9,990 & 9,990 & 7,746 & 7,746 & 6,032 & 6,032 \\
    Ind. FE & Y & Y & Y & Y & Y & Y \\
    Time FE & Y & Y & Y & Y & Y & Y \\
    Trend & N & N & N & N & N & N \\
    Trend x Treat & N & N & N & N & N & N \\
    \bottomrule
    \end{tabular}
\begin{tablenotes}[para,flushleft]
\emph{Notes:} The table presents the Difference-in-Differences estimates of the impact of LLaMA on the total weekly contributions of LLM researchers on GitHub. The dependent variable is the relative deviation of weekly contributions from their mean pre-event level. LM' denotes the group of researchers who have used Language Model' in their paper titles or abstracts. K-Means' denotes the group of researchers who have at least one paper in the cluster of NLP papers. The outcomes for Week 0 (the first seven days after LLaMA's announcement) are omitted. Academy' indicates the group of LLM researchers whose GitHub profiles indicate that they are working in academia, and `Industry' indicates the estimates for LLM researchers whose GitHub profiles indicate they are employed in the industry. Cluster-robust standard errors are displayed in parentheses.
\end{tablenotes}
\end{threeparttable}
\end{table}

\clearpage
\section{Theory Framework Appendix}
\setcounter{figure}{0} 
\label{sec:app_model}
\subsection{Model}
\subsubsection*{LLM's Demand Relation}
Recall that profit function for producer located at $x\in(m,1]$ is given by:
\begin{equation*}
    \pi_{i,t, \tau} = e^{-\gamma x_i} \left(q_{\tau,t} k_{i,t} \right)^\alpha - k_{i,t} - P_{\tau,t}
\end{equation*}
Consider the firm  located at point $x\in(m,1]$ in the AS is indifferent between paying $P$ to access model $A$'s API and using model $B$ for free. Since by assumption $q_A>q_B$ all producers located in $(m, x)$ will strongly prefer to use model $A$. Indifference condition for producer located at $x$ implies:
\begin{equation}
    \pi_A = \pi_B \Rightarrow  e^{-\gamma x} q_{A}^{\alpha} k_A^{\alpha} - k_A^\alpha - P = e^{-\gamma x} q_{B}^{\alpha} k_B^{\alpha} - k_B^\alpha
\label{eq:A1_indiff}
\end{equation}
where $k_A$ and $k_B$ are the optimal compute used when working with model $A$ and $B$, and given by:
\begin{equation*}
    k_{\tau} = \left( \alpha e^{-\gamma x} q_{\tau}^{\alpha}\right)^{1/(1-\alpha)}
\end{equation*}
where $\tau\in\{A,B\}$.

From the relations for optimal levels of compute we have $\alpha e^{-\gamma x}(q k)^{\alpha}=k$. Therefore, we can simplify Equation \ref{eq:A1_indiff} as:
\begin{equation*}
    \frac{k_A}{\alpha} - k_A -P = \frac{k_B}{\alpha} - k_B \\
    \Rightarrow P = \frac{1-\alpha}{\alpha} \left( k_A - k_B \right)
\end{equation*}
\begin{equation*}
  \Rightarrow P = \left(\frac{1-\alpha}{\alpha}\right) \left( \alpha e^{- \gamma x} \right)^{1/(1-\alpha)} \left( q_A^{\alpha/(1-\alpha)}  -  q_B^{\alpha/(1-\alpha)} \right)
\end{equation*}
\begin{equation*}
    \Rightarrow \ln \left( \frac{\alpha P}{1-\alpha} \right) = \left( \frac{1}{1-\alpha} \right) \left( \ln\alpha  - \gamma x \right)  + \ln \left( q_A^{\alpha/(1-\alpha)}  -  q_B^{\alpha/(1-\alpha)} \right)
\end{equation*}
\begin{equation*}
    \Rightarrow x = \frac{1}{\gamma} \left[ \ln \left(\alpha \left( q_A^{\alpha/(1-\alpha)}  -  q_B^{\alpha/(1-\alpha)} \right)^{(1-\alpha)}  \right) - (1-\alpha) \ln \left( \frac{\alpha P}{1-\alpha} \right) \right]
\end{equation*}
\begin{equation*}
    \Rightarrow Q_{A} = \frac{1}{\gamma} \left[ \ln \left(\alpha \left( q_A^{\alpha/(1-\alpha)}  -  q_B^{\alpha/(1-\alpha)} \right)^{(1-\alpha)}  \right) - (1-\alpha) \ln \left( \frac{\alpha P}{1-\alpha} \right) \right] - m
\end{equation*}
\subsubsection*{ Aggregate Profit Relation for Firm $\mathbf{A}$'s}
Since the transition equation in Firm $\mathbf{A}$'s dynamic problem only depends on aggregate compute used by all of its producers, the marginal profit  of any two producer it owns, w.r.t. compute $k$, must be equal, otherwise reallocation of one-unit of compute from a producer with lower marginal profit to the one with a higher marginal profit increases Firm $\mathbf{A}$'s profit. 

Therefore, consider $i$ and $j$ to be two producers owned by Firm $\mathbf{A}$ with $x_i, x_j \in [0, m]$. Following the logic provided above:
\begin{equation*}
    e^{-\delta x_i} k_i^{\alpha - 1} = e^{-\delta x_j} k_j^{\alpha - 1}
\end{equation*}

For simplicity assume $j=0$, 
\begin{equation*}
     j =0 \Rightarrow k_0 = e^{-\delta x_i/(\alpha - 1)} k_i 
     \Rightarrow k_i = e^{-\delta x_i/(1 - \alpha)} k_0
\end{equation*}

Therefore aggregate compute can be written as,
\begin{equation*}
    K = k_0 \int_{x_i=0}^{m} e^{-\gamma x_i/(1 - \alpha)} dx_i  
 = \frac{1-\alpha}{\gamma} k_0  \left[ 1 - e^{-\gamma m/(1-\alpha)} \right]
\end{equation*}
Consequently, we can write aggregate profit of Firm $\mathbf{A}$ as,
\begin{equation*}
    \Pi^F(K) = q^\alpha \int_{0}^{m} e^{-\gamma x_i} \, k_i^{\alpha} \, dx_i - K 
\end{equation*}
Or,
\begin{equation*}
\Pi^{F}(K) = q^\alpha \int_{0}^{m} e^{-\gamma x_i} \, e^{-\delta x_i/(1-\alpha)} k_0^{\alpha} \, dx_i - K
\end{equation*}
\begin{equation*}
\Rightarrow \Pi^{F}(K) = \left( \frac{\gamma K q}{(1-\alpha)(1-e^{-\gamma m /(1-\alpha)})}  \right)^{\alpha}   \int_{0}^{m} e^{-\gamma x_i} dx_i - K
\end{equation*}
After simplification, we can show that.
\begin{equation*}
    \Pi^{F}(K)= \Theta (qK)^{\alpha} - K
\end{equation*}
Where,
\begin{equation*}
    \Theta = \left( 1 - e^{-{\gamma m}/{(1-\alpha)}} \right)^{1-\alpha} \Bigg/ \left( \frac{\gamma}{1-\alpha} \right)^{1-\alpha}
\end{equation*}
\subsubsection*{Proof of Proposition 1}
\emph{Proof }:  Suppose for a given $q_B$ and other parameters of the model, there is a $q_A=q^*$ such that the value of open model $V^O(q^*)$ is equal to the value of closed model $V^C(q^*)$.  I want to show that for a small $\Delta q$, $V^C(q^* + \Delta q) > V^O(q^* + \Delta q)$ if $\Delta q>0$, and vice versa.

Suppose $\Delta q>0$, first-order approximation around $q^*$ implies, $V^C(q^* + \Delta q) \approx V^C(q^*) + \Delta q \, V^{C}_{q}(q^*)$, and  $V^O(q^* + \Delta q) \approx V^O(q^*) + \Delta q \, V^{O}_{q}(q^*)$. Since by assumption $V^C(q^*)=V^O(q^*)$,  $V^C(q^* + \Delta q) > V^O(q^* + \Delta q)$ only if $V^{C}_{q}(q^*)>V^{O}_{q}(q^*)$.

Now, let's recall the expressions for $V^O$,
\begin{equation*}
       V^{O}(q^*) = \max_{K} \left[ \Pi^F(q^*,K) + \beta \, V^{O}(q^* + \psi K + \phi K_{-A} ) \right]
\end{equation*}
Assuming flow of $K$ in any given period is small compared to stock of $q$,  
$$V^{O}(q^* + \psi K + \phi K_{-A}) \approx V^O(q^*) + (\psi K + \phi K_{-A}) V^{O}_{q}(q^*)$$
Consequently, first-order conditions (FOC) and the Envelop Theorem, imply:
\begin{equation*}
\centering
\begin{gathered}
     \text{FOC 1:} \quad  \Pi^{F}_{k}(q^*, K^O) + \beta \, \psi \, V^O_q(q^*)=0  \\
\text{Env 1.:} \quad  V^O_q(q^*)=  \Pi^{F}_{q}(q^*, K^O)
\end{gathered}
\end{equation*}

Also, for $V^C$ we have,
$$
       V^{C}(q, q_B) = \max_{K,P} \left[ \Pi^F(q,K) + \Pi^A(q,q_B,P) + \beta \max \left\{ V^{O}(q^* + \psi K) \, , V^C(q^* + \psi K \, , q_B + \phi K_{B})\right\}   \right] 
$$
After linear approximation and using $V^O(q^*)=V^C(q^*)$,
$$
       V^{C}(q, q_B) = \max_{K,P} \left[ \Pi^F + \Pi^A + \beta V^C(q^*) + \beta \max \left\{\psi K V^{O}_{q}(q^*) \, , \psi K V^{C}_{q}(q^*) + \phi K_{B} V^{C}_{q_B}(q^*) \right\}\right] 
$$
$V^C$ is decreasing with respect to $q_B$. Therefore, $V^{C}_{q_B}<0$. 

Now, assume that $V^O_q>V^C_q$.  Hence, $\max \left\{\psi K V^{O}_{q}(q^*) \, , \psi K V^{C}_{q}(q^*) + \phi K_{B} V^{C}_{q_B}(q^*) \right\} = \psi K V^{O}_{q}(q^*)$.  And we can rewrite $V^{C}(q, q_B)$ as,
$$
       V^{C}(q, q_B) = \max_{K,P} \left[ \Pi^F(q,K) + \Pi^A(q,q_B,P)+ \beta V^C(q^*) + \beta \psi K V^{O}_{q}(q^*) \right]
$$
The FOC w.r.t $K$ implies,
$$\text{FOC 2} : \Pi_K^F(q^*, K^C) + \beta \psi V^O_q(q^*)= 0
$$
However, from the FOC of open model we know: $\Pi^{F}_{k}(q^*, K^O) + \beta \, \psi \, V^O_q(q^*)=0 $.  Therefore we must have,
$$\Pi^{F}_{k}(q^*, K^O) = \Pi^{F}_{k}(q^*, K^C)  \Rightarrow K^O = K^C \Rightarrow \Pi^{F}(q^*, K^O) = \Pi^{F}(q^*, K^C)$$

From applying Envelop theorem we have,
$$\text{Env 2: }  V^C_q(q^*) = \Pi_q^F(q^*, K^C) + \Pi_q^A (q^*, q_B, P)
$$
But from Env 1 and $K^O=K^C$,
$$ V^O_q(q^*) = \Pi_q^F(q^*, K^O) = \Pi^{F}_{q}(q^*, K^C)$$
Therefore, it must be that,
$$V^C_q(q^*) = V^O_q(q^*) + \Pi_q^A (q^*, q_B, P)$$
As profit from API is increasing w.r.t $q$, we know  $\Pi_q^A (q^*, q_B, P)>0$. However, $\Pi_q^A (q^*, q_B, P)>0$ contradicts the assumption we made about $V^O_q(q^*) > V^C_q(q^*)$. Therefore, if $V^O$ and $V^C$ intersect at $q^*$, then $V^O_q(q^*) < V^C_q(q^*)$. 

Since $V^O$ and $V^C$ are both increasing functions of $q$ and $V^O_q < V^C_q$ at any point of intersection, then if $q^*$ exists, it must be unique. Moreover, for any $q>q^*$ $V^C(q)>V^O(q)$ and vice versa. $\qed$

\subsubsection*{Proof of Proposition 2}
Let's first recall the Bellman equation for closed model,
\begin{equation*}
\centering
\begin{gathered}
       V^{C}(q, q_B) = \max_{K_A,P} \left[ \Pi^F(q,K_A) + \Pi^A(q,q_B,P) + \beta \max \left\{ V^{O}(q'), V^C(q', q_B')\right\}   \right] \\
\text{s.t.} \quad q' = q + \psi K_A \\
\text{\& }   \quad q'_B = q_B + \phi K_{B} 
\end{gathered}
\end{equation*}
First, let's consider there is some $\delta>0$ such that at the optimal solution $V^C (q', q'_B) + \delta > V^O (q')$. That is the optimal solution implies that Firm $\mathbf{A}$ will keep the model closed in the subsequent period. Then, the FOC w.r.t $P$ and Envelop theorem w.r.t $q_B$ imply,
\begin{align*}
\text{FOC: } &  \Pi^A_P(q, q_B, P) + \beta \, V^C_{q_B}(q',q_B') \, \frac{\partial K_B}{\partial P} = 0 \\
\text{Env.: } & V^C_{q_B}(q,q_B,P) = \Pi^A_{{q_B}}(q,q_B, P)
\end{align*}
Substituting $V^C_{q_B}$ in the FOC from the Envelop theorem results in,
$$\Pi^A_P(q, q_B, P) + \beta \, \Pi^A_{{q_B}}(q,q_B, P) \, \frac{\partial K_B}{\partial P} = 0 $$
However, we know that profits from API is decreasing w.r.t. $q_B$. Hence, $\Pi^{A}_{q_B}<0$. Also, an increase in $P$ results in switching from model $A$ to model $B$ and therefore  an increase in $K_B$, i.e.,  $\frac{\partial K_B}{\partial P}>0$.  Therefore, for the above equality to hold at the optimal solution, we must have that,
$$\Pi^A_P(q, q_B, P) >0 $$
As $\Pi^A$ is concave w.r.t. $P$, the derivation above  implies that Firm $\mathbf{A}$ sets the price of its API below the revenue-maximizing value $P^*$ where $\Pi^A_P(q, q_B, P^*)=0$.

Conversely, let's assume that there is a $\delta'>0$ such that at the optimal solution $V^O (q') > V^C (q', q'_B) + \delta'$,  Then, the FOC w.r.t $P$  implies,
$$\Pi^A_P(q, q_B, P) = 0$$
Therefore, if the optimal choices in that space implies that Firm $\mathbf{A}$ must open it's model in the subsequent period, the firm will set the price of its API equal to it's revenue maximizing value. $\qed$
\subsection{Numerical Analysis}

In the numerical analysis of the dynamic programming model, I employed Value Function Iteration (VFI) as the primary method for solving the model. The computational work was executed using Python, with the Numba library to optimize performance. The process involved initially solving the model to determine the value of open source model. This solution then served as a foundational input to subsequently solve the model for the value of closed model. I discretized the state and control variables into intervals of equal length. Additionally, the producer's grid was discretized over the range $[0,1]$ with 1000 equally spaced points. The  parameters used in the numerical analysis of the open model are detailed in Table \ref{tab:open_params}.

\begin{table}[h] 
\footnotesize
    \centering
    \caption{VFI Parameters- Opens Model}
\begin{threeparttable}   
\begin{tabular}{lcc}
\toprule
Description &  Value  \\
\midrule
Model quality lower bound &  0  \\
Model quality upper bound &   500 \\
Model quality grid size & 101 \\
Compute lower bound &  0 \\
Compute upper bound &  20 \\
Compute grid size &  100 \\
Size grid producers &  1000 \\

\bottomrule
\end{tabular}
\end{threeparttable}
\label{tab:open_params}
\end{table}

In the analysis of the closed model, the value of the open model, derived from the preceding analysis, was used in obtaining the results. Moreover, the analysis of the closed model was substantially more demanding from computational aspects with the introduction of an additional state variables (quality model $B$) and an additional control variables (API price $P$). As a result, the computational complexity of the model substantially increases which necessitated special considerations, such as employing smaller grids for the control variables. 

Furthermore, an additional choice was introduced in the model to facilitate the  analysis for the states where quality of model $A$ was less than model $B$. This choice involved an additional option for Firm of incurring a cost proportional to $q_B$ to transition from using internal model $A$ to model $B$. However, this option only influenced the solution in scenarios where $q_A$ was substantially smaller than $q_B$, which was not the  focus of the main analysis about the open sourcing decision of the firm which was conditioned on $q_A>q_B$. 

To enhance the efficiency of the control grid usage, for any given state, I determined the maximum $P$ that rendered the producer at $x = m$ (the producer with the highest willingness to pay) indifferent between choosing model $A$ and $B$. The range from 0 to $P_m$ was then divided into 20 equal segments. Additionally, I adopted an adaptive approach to define the upper bound of the control variable $K$, based on the model parameters, ensuring that the maximum $K$ for each model configuration remained within the grid's boundaries for $K$. The modeling parameters and their specific details are outlined in Table \ref{tab:closed_params}.

For additional details on the numerical analysis of the model and insights into the nuances of its implementation, I invite readers to refer to the accompanying code.

\begin{table}[h] 
\footnotesize
    \centering
    \caption{VFI Parameters- Opens Model}
\begin{threeparttable}   
\begin{tabular}{lcc}
\toprule
Description &  Value  \\
\midrule
Model quality lower bound &  0  \\
Model quality upper bound &   500 \\
Model quality grid size & 101 \\
Compute lower bound &  0 \\
Compute upper bound &  \textit{adaptive} \\
Compute grid size & 50 \\
API price grid size & 20 \\
Size grid producers &  1000 \\

\bottomrule
\end{tabular}
\end{threeparttable}
\label{tab:closed_params}
\end{table}

\clearpage
\subsection{Additional Results}

\subsubsection*{Open Source Window Size and Quality Model $B$}
Figure \ref{fig:os_window_q_B} shows how the open source window size changes due to changes in the quality of the alternative model $q_B$. As it is displayed in the figure, the absolute size of the open source window is increasing with $q_B$. However, the relative size of the window w.r.t $q_B$ is fairly constant.
\begin{figure}[htp]
\centering
\footnotesize
\caption{Quality $B$ and open source Window $A$}
\begin{minipage}{.8\linewidth}
\label{fig:os_window_q_B}
    \centering
    \includegraphics[width=\linewidth]{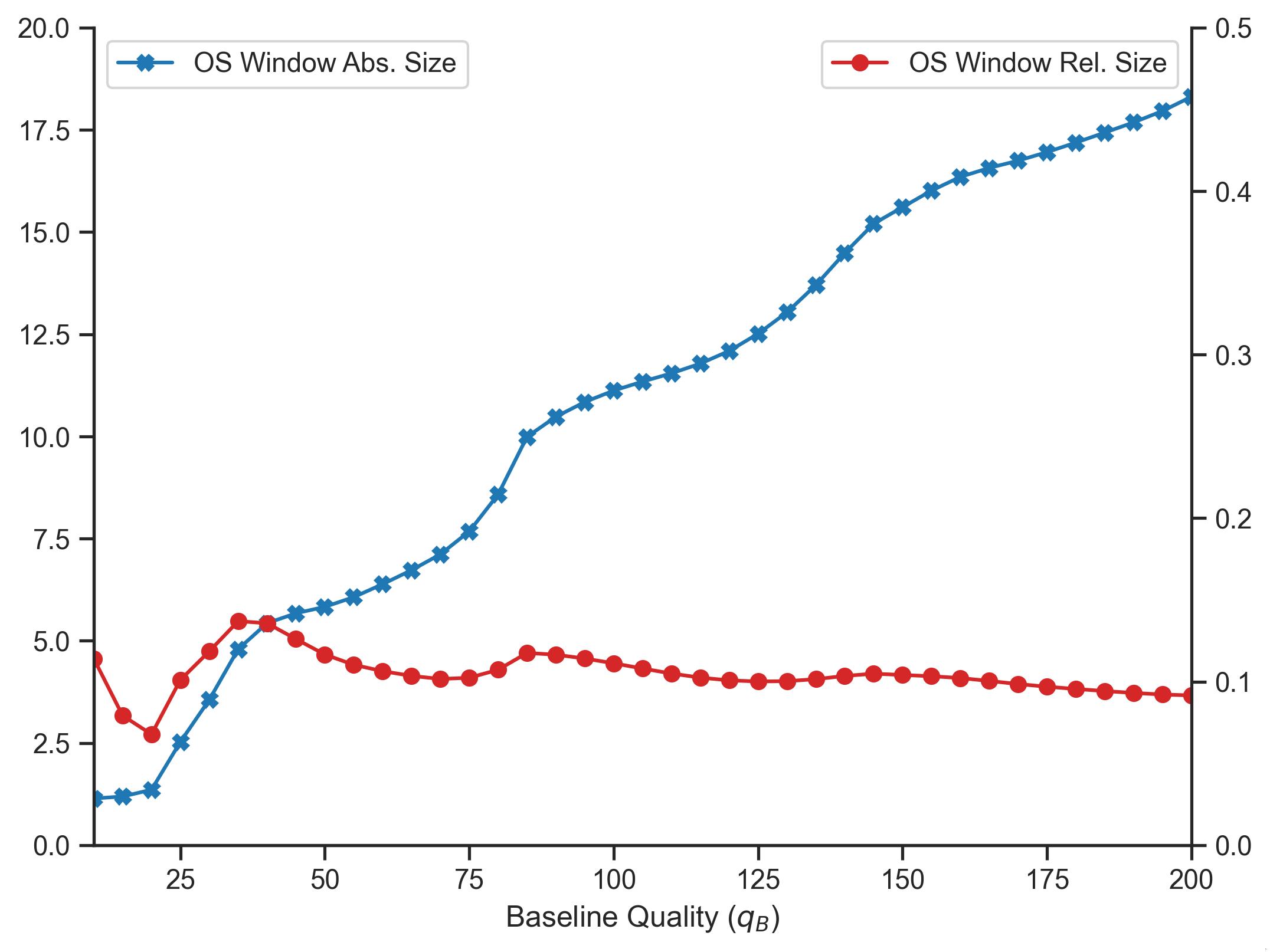}
    \parbox{\textwidth}{\footnotesize{\textit{Notes:} The figure plots the absolute and relative size of the open source window by quality of the open source model alternative. The modelling parameters used in the figure are detailed in Table \ref{tab:model_params}.}}
\end{minipage}
\end{figure}

\clearpage
\subsubsection*{Open Source Window Size and the Efficiency of open source Ecosystem}
Figure \ref{fig:osw_phi} shows how the open source window size changes due to changes in the efficiency parameter $\phi$ of the open source ecosystem. As expected, the open source window is increasing in $\phi$.
\begin{figure}[htp]
\centering
\footnotesize
\caption{Efficiency open source Community $\phi$ and open source Window}
\label{fig:os_window_phi}
\begin{minipage}{.8\linewidth}

    \centering
    \includegraphics[width=\linewidth]{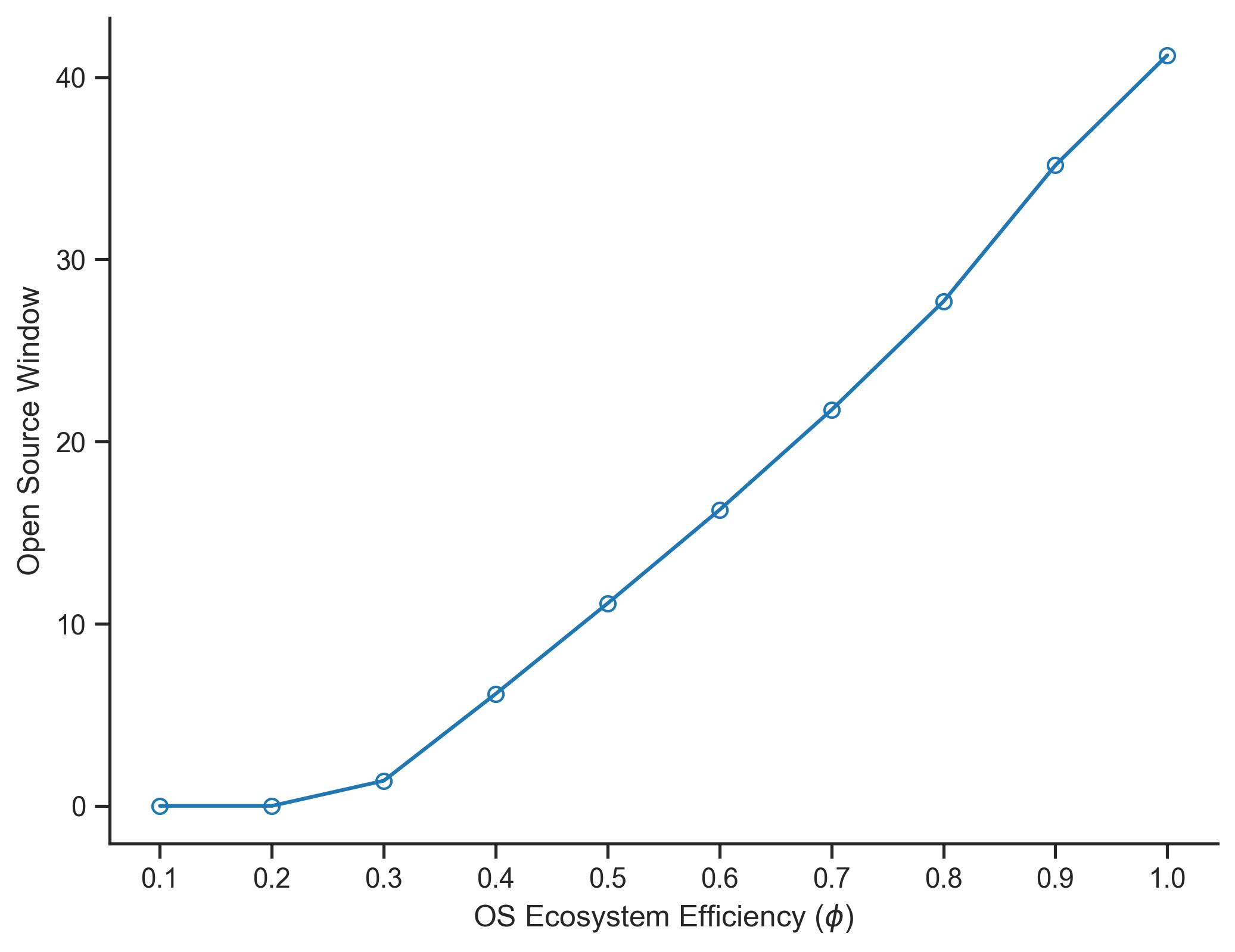}
    {\footnotesize{\textit{Notes:} The figure plots the open source window size as a function of efficiency of the open source ecosystem. The modelling parameters used in the figure are detailed in Table \ref{tab:model_params}.}}
\end{minipage}

\end{figure}

\end{document}